 \documentclass[reprint, aip, jmp, floatfix, nofootinbib, eqsecnum, citeautoscript, letterpaper, onecolumn]{revtex4-2}

\usepackage[utf8]{inputenc}
\usepackage{tikz}
\usetikzlibrary{shapes,arrows}
\usepackage{amsthm}
\usepackage{bm}
\usepackage{amsmath,amsfonts,amssymb}
\usepackage[letterpaper]{geometry}
\usepackage{color} 
\usepackage{graphicx, xcolor}
\geometry{verbose,tmargin=1.1in,bmargin=1.32in,lmargin=1.1in,rmargin=1.0in,footskip=0.8cm}
\usepackage{hyperref}
\usepackage{algorithm}
\usepackage{algpseudocode}

\usepackage[capitalise]{cleveref}
\crefformat{thm}{Thm.~#2#1#3}
\crefformat{proposition}{Prop.~#2#1#3}
\crefformat{lemma}{Lem.~#2#1#3}
\crefformat{definition}{Defn.~#2#1#3}
\crefformat{equation}{(#2#1#3}
\crefformat{appendix}{App.~#2#1#3}
\crefformat{section}{Sec.~#2#1#3}
\crefformat{algorithm}{Alg.~#2#1#3}



\newcommand{\R}{\mathbb{R}}

\newcommand{\bS}{\mathbb{S}}

\newcommand{\grad}[1]{\nabla #1} 

\newcommand{\hatt}{\hat{t}}
\newcommand{\hn}{\hat{n}}
\newcommand{\hb}{\hat{b}}

\newcommand{\bs}{{\bm{\sigma}}}


\newcommand{\citeInline}[1]{Ref.~[\onlinecite{#1}]}


\theoremstyle{plain}
\newtheorem{thm}{Theorem}[section]
\newtheorem{proposition}[thm]{Proposition}

\theoremstyle{definition}
\newtheorem{definition}[thm]{Definition}

\newtheorem{remark}[thm]{Remark}

\begin{document}
\title{Minimizing Separatrix Crossings through Isoprominence}

 \author{J.W. Burby}
 \affiliation{Los Alamos National Laboratory, Los Alamos, NM 97545 USA}
 \author{N. Duignan}
 \thanks{Email correspondence: Nathan.Duignan@sydney.edu.au}
 \affiliation{Department of Applied Mathematics, University of Colorado, Boulder, CO 80309 USA}
 \affiliation{School of Mathematics and Statistics, University of Sydney, NSW 2050, Australia}
 \author{J.D. Meiss}
 \affiliation{Department of Applied Mathematics, University of Colorado, Boulder, CO 80309 USA}

\begin{abstract}
     A simple property of magnetic fields that minimizes bouncing to passing type transitions of guiding center orbits is defined and discussed. This property, called isoprominence, is explored through the framework of a near-axis expansion. It is shown that isoprominent magnetic fields for a toroidal configuration exist to all orders in a formal expansion about a magnetic axis. Some key geometric features of these fields are described.
\end{abstract}

\date{\today}
\maketitle

\section{Introduction}\label{sec:Intro}
As a collisionless charged particle moves through a strong inhomogeneous magnetic field, $\bm{B}$, its motion comprises three disparate timescales. On the gyrofrequency timescale \cite{Northrop_1963,Cary_Brizard_2009} the particle's position along the field line is frozen while it rapidly rotates around the local magnetic field vector with gyroradius $\rho$. This rapid, nearly-periodic motion gives rise to near conservation of the famous magnetic moment $\mu$, which, more generally, is an adiabatic invariant.\cite{Kruskal_1962,Burby_adiabatic_2020} On a longer timescale, comparable to $L/v$ with $L$ the field scale length and $v$ the characteristic particle speed, the particle moves along magnetic field lines while experiencing the so-called mirror force $ -\mu \bm{b} \cdot \nabla |\bm{B}|$
where $\bm{b}$ is the unit vector along $\bm{B}$.
The perpendicular kinetic energy $\mu|\bm{B}|$, when restricted to a particle's nominal field line plays the role of an effective potential for the particle's parallel dynamics. When the particle's energy is low enough that it is trapped in a well for this potential, it bounces back and forth between a pair of turning points. When the particle is not bouncing---its energy is larger than the highest potential peak---unbounded streaming along the field line ensues. These two scenarios correspond to bouncing and passing orbits, respectively. Finally, on the longest timescale the particle drifts from field line to field line \cite{Littlejohn_bounce_1982} due to the presence of perpendicular gradients in the magnetic field. If the orbit is initially bouncing and this drift does not cause a sudden change in the turning points, then the motion approximately conserves the longitudinal adiabatic invariant $J = \oint u ds$ where $u = \bm{b} \cdot \bm{v}$ is the parallel velocity and $s$ the position along the field line. But when the turning points do suffer a sudden change, then $J$ ceases to be well-conserved. If the particle is still bouncing, the value of $J$ suffers a quasi-random jump, but if one or both of the turning points suddenly disappears $J$ ceases to be well-defined altogether. When a bouncing particle has turning points that suffer any such abrupt change, we say the particle suffers an \textit{orbit-type transition}. It is these orbit-type transitions that bear the responsibility for breakdown of the adiabatic invariance of $J$.

Since orbit-type transitions correlate with deleterious particle transport in magnetic confinement devices such as stellarators,\cite{Helander_2014} the search for magnetic field configurations that minimize the probability of orbit-type transitions warrants detailed investigation. Various strategies have been envisioned for identifying fields which minimize such transitions, including quasisymmetry\cite{Boozer_qs_1983,BKM_2020,Landreman_Paul_2022,Landreman_2022} and omnigeneity\cite{Cary_Shash_1997,Cary_Shash_pop_1997,Parra_2015}. Here we present an initial study of a different strategy that we refer to as \textit{isoprominence}. An isoprominent field is defined as a nowhere-vanishing, divergence-free, vector field $\bm{B}$ such that the height of each potential peak of $\mu\,|\bm{B}|$ is independent of field line, as sketched in Fig.~\ref{fig:IsoPField}, below. More precisely, isoprominence requires that $|\bm{B}|$ is locally constant when restricted to the surface $\Sigma^-$ defined as the set of points such that $\bm{b}\cdot\nabla |\bm{B}|=0$ and $(\bm{b}\cdot\nabla)^2|\bm{B}|<0$. 

As we will discuss in \cref{sec:basic_physics_of_isoprominence}, particles that move in an isoprominent field cannot suffer orbit-type transitions to first order in guiding center perturbation theory. While isoprominent fields do not comprise the most general class of magnetic fields with this property, they enjoy the benefit of an immediately transparent physical interpretation. Moreover, as we show in \cref{sec:FormalExistence}, isoprominent fields may be constructed to all orders in an expansion about a given magnetic axis in powers of distance from the axis. We will present details of this asymptotic expansion as well as examples of magnetic fields that are very nearly isoprominent in \cref{sec:Examples}. Though the fields we construct do not necessarily satisfy the equilibrium conditions of magnetohydrodynamics, we hope that this initial study motivates further study of isoprominence as a potentially useful concept for stellarator optimization.

\begin{figure}[ht]
    \centering
    \includegraphics[width=0.8\linewidth]{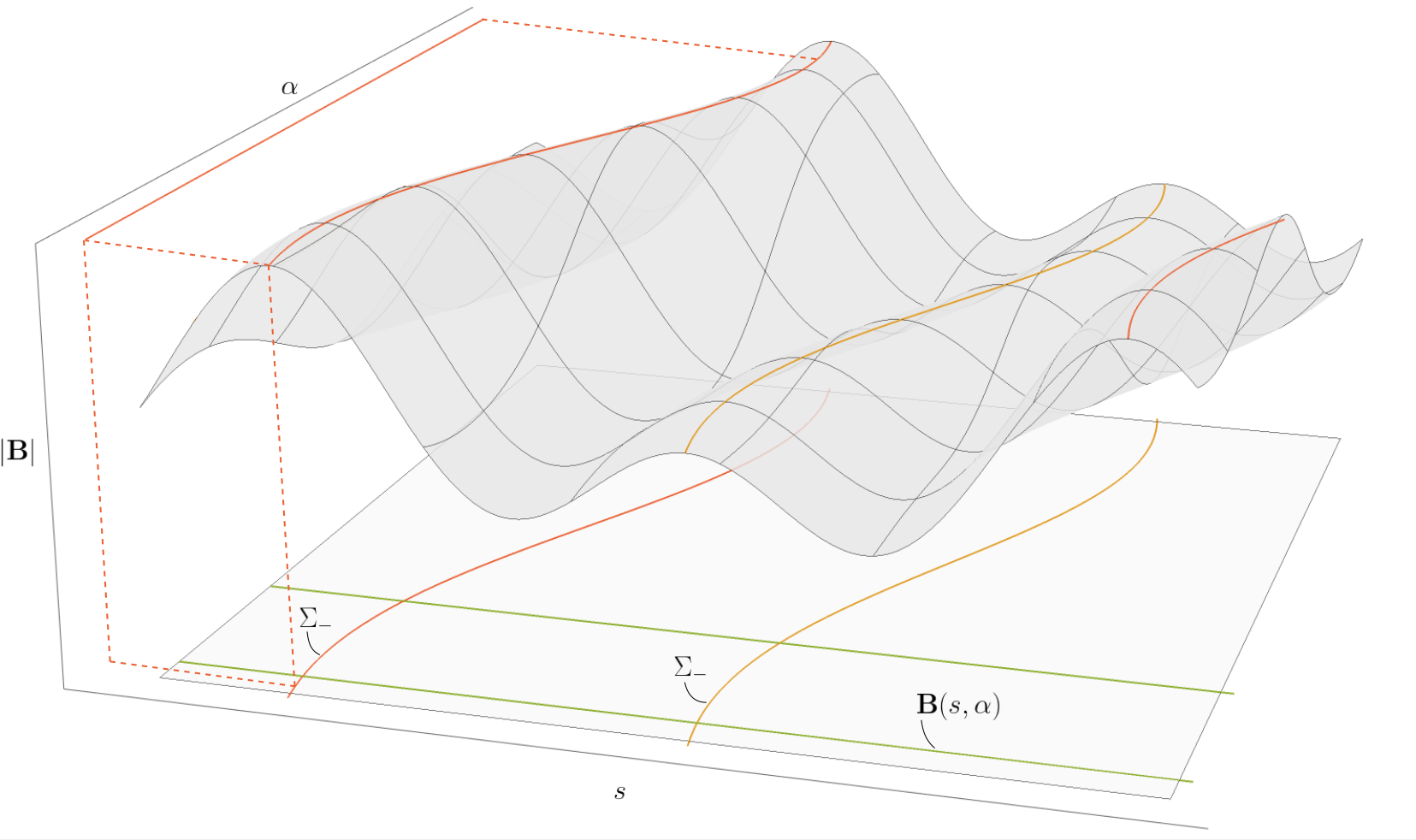}
    \caption{\footnotesize Sketch of the magnetic topography, $|\bm{B}(s,\alpha,0)|$,
    for an isoprominent field, where $s$ denotes field line arc length and $\alpha$ a coordinate on $\Sigma^-$. Note that the maximum values of $|\bm{B}|$ on each of the two components of $\Sigma^-$ does not change from one field line to the next, even though the position of the maxima on the $s$-axis and its value along the valley can both vary.}
    \label{fig:IsoPField}
\end{figure}

\section{Isoprominent magnetic fields\label{sec:basic_physics_of_isoprominence}}

In this section we define the property of isoprominence for a magnetic field. Then, we give an intuitive argument as to why an isoprominent field should mitigate orbit-type transitions in guiding center dynamics. A formal proof is then given in \cref{prop:IsoprominenceMinimizesTransitions}.

Let $\bm{B}$ be a smooth, nowhere-vanishing, divergence-free, vector field defined on an open region $Q\subset \R^3$. The scalar functions 
\begin{align*}
    B & = |\bm{B}|,\quad  B^\prime = \bm{b}\cdot \nabla B,\mbox{ and } B^{\prime\prime}  = (\bm{b}\cdot \nabla)^2B,
\end{align*}
where $\bm{b} = \bm{B}/B$, quantify the magnitude of $\bm{B}$ and its rate of change along integral curves of the magnetic field---called, for simplicity, $\bm{B}$-lines. If $B^\prime$ vanishes at a point $\bs\in Q$ then $B$ restricted to the $\bm{B}$-line passing through $\bs$ has a critical point at $\bs$. We say that $\bs$ is \textbf{critical along }$\bm{B}$. In this case, if $B^{\prime\prime}(\bs)$ is non-zero then $B$ restricted to the $\bm{B}$-line passing through $\bs$ has either a local maximum or local minimum at $\bs$. We say that $B$ is \textbf{locally minimal along }$\bm{B}$ or \textbf{local maximal along }$\bm{B}$ at $\bs$ according to the sign of $B^{\prime\prime}(\bs)$.

\begin{definition}
The \textbf{magnetic ridge} associated with $\bm{B}$ is the smooth submanifold $\Sigma^-\subset Q$ comprising points $\bs\in Q$ such that $B$ is locally maximal along $\bm{B}$ at $\bs$. Note that $\Sigma^-$ may have several connected components.
\end{definition}

Since points $\bs\in\Sigma^-$ are not critical points of $B: Q \to \R$ in the usual sense, the function $B^-=B|_{ \Sigma^-}:\Sigma^-\rightarrow\R$ is generally non-constant, recall Fig.~\ref{fig:IsoPField}. This general situation may be visualized as follows. Fix $\bs\in\Sigma^-$ and let $\ell_\bs(s)$ be the $\bm{B}$-line passing through $\bs$
parameterized by arc length $s$. 
The restriction of $B$ to $\ell_\bs$ defines a single-variable function $B_\bs = B|_{\ell_\bs}$, with a graph $(s,B_\bs(s))$ that 
we call the \textit{magnetic topography} of $\bs$. 
In general this topography has various peaks and valleys. Upon variation of $\bs\in \Sigma^-$ the magnetic topography will continuously deform, the peaks shifting in $s$ and changing in height. Of course, a peak may also collide with a valley, and either peaks or valleys may evaporate. 

With this general picture in mind, isoprominence is defined as follows.
\begin{definition}\label{def:isoprominence}
A nowhere-vanishing, divergence-free vector field $\bm{B}$, defined on an open region $Q\subset \R^3$, is \textbf{isoprominent} if the magnitude of $\bm{B}$ is constant when restricted to a component of the magnetic ridge $\Sigma^-$.
\end{definition}

\noindent For an isoprominent field, the magnetic topography may still deform as $\bs$ varies within $\Sigma^-$, but only in a restricted manner - the peak heights $B_\bs(\bs)$ cannot change. Note also that $\Sigma^-$ for an isoprominent field is a manifold of degenerate critical points for $B$.

Isoprominence mitigates orbit-type transitions for bouncing particles. This can be understood intuitively as follows. In the guiding center approximation, a bouncing particle that starts on a field line $\ell_\bs$
initially oscillates in a magnetic well that is defined by a pair of magnetic peaks 
$s_a^*$ and $s_b^*$, where $\ell_\bs(s_a^*), \ell_\bs(s_b^*) \in \Sigma^-$. 
As $\bs$ varies, so to do $s_a^*$,and $s_b^*$. It follows that $s_a^*$, and $s_b^*$ are functions of $\bs$.
Without loss of generality, let $s_a^*(\bs)<s_b^*(\bs)$. Generally, the
oscillations of a bouncing particle have turning points $s_a(\bs)<s_b(\bs)$ where  ${B}_\bs(s_a(\bs))={B}_\bs(s_b(\bs)) = E/\mu$, for particle energy $E$ and magnetic moment $\mu$. Since the particle bounces in the well, $s_a^*(\bs) < s_a(\bs) < s_b(\bs) < s_b^*(\bs)$ and 
\begin{align}\label{iso_inequality}
    \min\left( B_\bs(s_a^*(\bs)), B_\bs(s_b^*(\bs))\right) > E/\mu.
\end{align} 
Note that equality is not allowed here because such orbits would asymptotically approach $\Sigma^-$ and not bounce.
In order for the particle to suffer an orbit-type transition it must drift onto a field line $\ell_{\bs^\prime}$ where at least one of the bounce points $s_a(\bs^\prime)$  or $s_b(\bs^\prime)$ becomes coincident with $s_a^*(\bs^\prime)$ or $s_b^*(\bs^\prime)$. But if this were to happen for an isoprominent field then, since energy is conserved,
\begin{align*}
    E/\mu = {B}_{\bs^\prime}(s^*_{a/b}(\bs^\prime)) = {B}_\bs(s^*_{a/b}(\bs)),
\end{align*}
which contradicts the inequality \eqref{iso_inequality}.

The weakness of this intuitive reasoning is that it assumes the guiding center Hamiltonian is given precisely by
\begin{equation}\label{eq:H0}
    H_0(\bm{x},u) = \tfrac{1}{2}u^2 + \mu\,B.
\end{equation}
However, in non-constant magnetic fields the single-particle Hamiltonian in guiding center coordinates includes an infinite series of higher-order correction terms.\cite{Cary_Brizard_2009,Burby_Squire_2013,Burby_Squire_2014} To understand the true implications of isoprominence in the context of higher-order guiding center perturbations, we require a more general argument with a slightly weaker conclusion. As we will now explain, the higher-order terms can be accounted for at the price of only approximately ensuring the suppression of type transitions.

Our strategy will amount to first characterizing the set in phase space that separates different bouncing orbit types and then analyzing the component of the guiding center vector field transverse to this separatrix. As we will show, the bounce average of the transverse component vanishes through first order in the guiding center expansion parameter $\epsilon = \rho/L$ for isoprominent fields. This will imply that the bounce-averaged flux of particles across type boundaries is at most $O(\epsilon^2)$.

The various classes of bouncing orbit types are defined in terms of the \textbf{zeroth-order guiding center} (ZGC) equations,
\begin{align*}
    \dot{\bm{x}} &= u\,\bm{b}(\bm{x}) ,\\
    \dot{u} & = -\mu\,\bm{b}(\bm{x})\cdot \nabla B(\bm{x}),
\end{align*}
that describe the motion of a particle with guiding center $\bm{x}\in Q$ and parallel velocity $u\in\R$ in the limit $\epsilon\rightarrow 0$. These equations exactly conserve the energy \eqref{eq:H0}. As mentioned above, solutions of these equations that lie on the boundary of bouncing orbits asymptotically approach $\Sigma^-$ either in the future or the past. If $(\bm{x},u)\in Q\times\R$ is the initial condition for such an orbit and $\bs\in \Sigma^-$ is the magnetic peak it approaches asymptotically, then the value of its conserved energy is given by
\begin{align*}
    H_0(\bm{x},u) = \mu\,B(\bs).
\end{align*}
We say $(\bm{x},u)$ is contained within the \textbf{separatrix}.

The following proposition characterizes the tangent space to the separatrix when the magnetic field $\bm{B}$ is isoprominent.
\begin{proposition}\label{sep_energy_equiv}
Suppose $\bm{B}$ is isoprominent. If $(\bm{x},u)$ lies within the separatrix for the ZGC equations and $\bm{x}\notin \Sigma^-$ then the tangent space to the separatrix at $(\bm{x},u)$ is spanned by vectors of the form
\begin{align*}
    \begin{pmatrix} \delta\bm{x}\\ \delta u \end{pmatrix} = 
    \begin{pmatrix} \delta\bm{x}\\ 
                    -\mu\,\frac{1}{u}\,\delta\bm{x}\cdot \nabla B(\bm{x})
    \end{pmatrix},
    \quad \delta\bm{x}\in\R^3.
\end{align*}
\end{proposition}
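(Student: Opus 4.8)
Write $\mathcal{S}$ for the separatrix. The plan is to show that, for an isoprominent field and near any of its points $(\bm{x},u)$ with $\bm{x}\notin\Sigma^-$, $\mathcal{S}$ is a regular level set of the energy $H_0$; the claimed spanning set then follows at once. Indeed, the differential of $H_0$ at $(\bm{x},u)$ sends $(\delta\bm{x},\delta u)$ to $\mu\,\delta\bm{x}\cdot\nabla B(\bm{x})+u\,\delta u$, so the vectors displayed in the proposition are precisely those it annihilates --- the relation $\delta u=-\mu\,u^{-1}\,\delta\bm{x}\cdot\nabla B$ is the same as this pairing vanishing --- and, as the formula presumes, $u\neq 0$, so $dH_0(\bm{x},u)\neq 0$ and its kernel is a genuine three-dimensional hyperplane. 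It therefore suffices to prove $T_{(\bm{x},u)}\mathcal{S}=\ker dH_0(\bm{x},u)$.

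Next I would set up the dynamics. Lift the ridge to $\widehat\Sigma^-:=\Sigma^-\times\{0\}\subset Q\times\R$. Since $\Sigma^-\subset\{B'=0\}$, every point of $\widehat\Sigma^-$ is an equilibrium of the ZGC flow, and a brief linearization at $(\bs,0)$ --- using $\dot{\bm{x}}=u\,\bm{b}$, $\dot u=-\mu B'$ and $B''(\bs)<0$ --- produces a pair of eigenvalues $\pm\sqrt{-\mu B''(\bs)}$ transverse to $\widehat\Sigma^-$, the remaining eigendirections being exactly $T_{(\bs,0)}\widehat\Sigma^-$. Hence $\widehat\Sigma^-$ is a normally hyperbolic invariant manifold with one-dimensional stable and unstable fibers; its stable and unstable manifolds $W^{s},W^{u}$ are smooth hypersurfaces away from $\widehat\Sigma^-$, each carrying a continuous fiber projection ($\pi^{s}$ and $\pi^{u}$, respectively) onto $\widehat\Sigma^-$, and unpacking ``asymptotically approaches $\Sigma^-$'' identifies $\mathcal{S}$ with $W^{s}\cup W^{u}$. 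Because the points of $\widehat\Sigma^-$ are fixed, $\pi^{s}(q)=\lim_{t\to+\infty}\Phi_t(q)$ exactly, and likewise $\pi^{u}$ with $t\to-\infty$.

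The crux is that isoprominence confines $\mathcal{S}$ to a single level set of $H_0$. Fix $(\bm{x},u)\in\mathcal{S}$ with $\bm{x}\notin\Sigma^-$, say $(\bm{x},u)\in W^{s}$ (the $W^{u}$ case is identical), and let $\bs:=\pi^{s}(\bm{x},u)$; it lies in a component $\Sigma^-_0$ of the ridge on which $|\bm{B}|\equiv B^-$ by isoprominence, so energy conservation gives $H_0(\bm{x},u)=\mu B(\bs)=\mu B^-$. For any $(\bm{x}',u')\in W^{s}$ close to $(\bm{x},u)$, continuity of $\pi^{s}$ keeps $\pi^{s}(\bm{x}',u')$ in $\Sigma^-_0$, so $H_0(\bm{x}',u')=\mu B\!\left(\pi^{s}(\bm{x}',u')\right)=\mu B^-$ as well. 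Thus, near $(\bm{x},u)$, $W^{s}$ lies inside the smooth hypersurface $\{H_0=\mu B^-\}$, and --- two hypersurfaces of $Q\times\R$ with one contained in the other coinciding near any shared point --- $W^{s}$ equals $\{H_0=\mu B^-\}$ there. The same applies to any branch of $\mathcal{S}$ meeting a small enough neighborhood of $(\bm{x},u)$, whose energy is pinned to $\mu B^-$ by continuity of $H_0$; hence $\mathcal{S}$ coincides with $\{H_0=\mu B^-\}$ near $(\bm{x},u)$. (As a check, the reverse inclusion is also immediate: a point of $\{H_0=\mu B^-\}$ near $(\bm{x},u)$ sits on a $\bm{B}$-line whose peak near $\bs$ still has height $B^-$ by isoprominence, hence on a separatrix orbit.) Therefore $T_{(\bm{x},u)}\mathcal{S}=T_{(\bm{x},u)}\{H_0=\mu B^-\}=\ker dH_0(\bm{x},u)$, which is the span claimed.

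The step I expect to be the real obstacle is the dynamical-systems input folded into the last two paragraphs: establishing the normal hyperbolicity of $\widehat\Sigma^-$ (for which one wants $B''\le -c<0$, automatic on compact pieces of a component, and otherwise handled by localizing along the orbit at hand), and then invoking the smoothness of $W^{s},W^{u}$ together with the continuity of their fiber projections. This is exactly the machinery that converts the pointwise hypothesis ``$|\bm{B}|$ is constant on each component of $\Sigma^-$'' into the geometric statement ``the separatrix is a level set of $H_0$''; everything afterwards is bookkeeping, modulo a little extra care at turning points of separatrix orbits, where $u=0$ and the displayed formula must be read in the cleared form $u\,\delta u=-\mu\,\delta\bm{x}\cdot\nabla B$.
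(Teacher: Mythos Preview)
Your proof is correct and follows the same route as the paper's: both show the separatrix is locally a level set of $H_0$ (energy conservation along the ZGC flow, the forward/backward limit onto $\Sigma^-$, and isoprominence pinning the limiting value of $B$), then read off the tangent space as $\ker dH_0$. You supply more explicit dynamical-systems scaffolding---the normally hyperbolic structure of $\widehat\Sigma^-$ and the resulting smoothness, dimension, and fiber-projection continuity of $W^{s,u}$---where the paper argues informally, and you close the dimension count (two hypersurfaces, one contained in the other, hence locally equal) that the paper leaves implicit.
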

\begin{proof}
First we will argue that when $\bm{B}$ is isoprominent, each connected component of the separatrix is contained in a level set of $H_0$. Equivalently, we will show that $H_0$ is locally-constant under the flow of the ZGC flow along the separatrix. Let $(\bm{x},u)$ be a point contained in the separatrix. Without loss of generality, assume $(\bm{x},u)$ has a forward-time limit, and so is a point in the stable manifold of $\Sigma^-$. 
Let $F_t:Q\times\R\rightarrow Q\times\R$ be the ZGC flow. Choose an open neighborhood $U$ contained in the separatrix and containing $(\bm{x},u)$ such that each point $(\bm{x}^\prime,u^\prime)\in U$ has a forward-time limit on a common connected component of $\Sigma^-$. If $(\bm{x}_1,u_1)$ and $(\bm{x}_2,u_2)$ are separatrix points in $U$ then
\begin{align*}
    H_0(\bm{x}_1,u_1) &= \lim_{t\rightarrow\infty}H_0(F_t(\bm{x}_1,u_1)) = \mu\,B(\bm{\sigma}_1,0) ,\\
    H_0(\bm{x}_2,u_2) & = \lim_{t\rightarrow\infty}H_0(F_t(\bm{x}_2,u_2)) = \mu\,B(\bm{\sigma}_2,0),
\end{align*}
where $(\bm{\sigma}_i,0) = \lim_{t\rightarrow\infty}F_t(\bm{x}_i,u_i)$. Since $\bm{\sigma}_1$ and $\bm{\sigma}_2$ are contained in a common connected component of $\Sigma^-$ and $B$ is locally constant on $\Sigma^-$ we must have $B(\bm{\sigma}_1) = B(\bm{\sigma}_2)$. It follows that $H_0(\bm{x}_1,u_1) = H_0(\bm{x}_2,u_2)$, and that $H_0$ is locally constant on the separatrix, as claimed. 

Now we will use the isoenergetic property of the separatrix to deduce the form of its tangent spaces. Let $(\bm{x},u)$ be a point in the separatrix with $u\neq 0 $. Any smooth curve $(\bm{x}(t),u(t))$ contained in the separatrix with $(\bm{x}(0),u(0)) = (\bm{x},u)$ must satisfy $H_0(\bm{x}(t),u(t)) = H_0(\bm{x},u)$. Differentiating in time implies
\begin{align*}
    0& = u\,\dot{u} + \mu\,\dot{\bm{x}}\cdot \nabla B,
\end{align*}
where $\dot{u} = du(0)/dt$ and $\dot{\bm{x}} = d\bm{x}(0)/dt$. This implies $\dot{u} = - \mu\,\dot{\bm{x}}\cdot \nabla B/u$, which is the desired result.

\end{proof}

\begin{remark}
Note that this result merely says the tangent space to the separatrix at $(\bm{x},u)$ is equal to the tangent space to the energy level containing $(\bm{x},u)$ when the magnetic field is isoprominent.
\end{remark}

Next we establish the general result that the leading-order energy for a conservative nearly-periodic system is well-conserved on average.\cite{Kruskal_1962,Burby_adiabatic_2020,Burby_Hirvijoki_2021,Burby_Hirvojoki_Leok_2022}

\begin{proposition}\label{general_averaged_energy}
Let $X_\epsilon = X_0 + \epsilon\,X_1 + \epsilon^2\,X_2 + \dots$ be a formal power series of a vector field in $\epsilon$ on a manifold $M$. Assume that $X_\epsilon$ admits a formal energy invariant $H_\epsilon = H_0 + \epsilon\,H_1 + \epsilon^2\,H_2 + \dots$ and that all trajectories for $X_0$ are periodic with angular frequency function $\omega_0$. Then, after averaging along $X_0$-orbits, the rate of change of $H_0$ along $X_\epsilon$ is $O(\epsilon^2)$.
\end{proposition}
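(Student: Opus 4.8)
The plan is to expand the claimed identity $\langle \dot H_0 \rangle_0 = O(\epsilon^2)$ by combining two facts: (i) the full energy $H_\epsilon$ is exactly conserved along $X_\epsilon$, so $X_\epsilon[H_\epsilon]=0$ identically in $\epsilon$; and (ii) the unperturbed flow is periodic, so the average $\langle\cdot\rangle_0$ of any function along $X_0$-orbits annihilates $X_0$-derivatives of periodic functions, i.e. $\langle X_0[f]\rangle_0 = 0$. First I would write out $X_\epsilon[H_\epsilon]=0$ order by order in $\epsilon$: the $O(1)$ term gives $X_0[H_0]=0$ (consistent with $H_0$ being constant on periodic orbits), and the $O(\epsilon)$ term gives
\begin{align*}
X_0[H_1] + X_1[H_0] = 0.
\end{align*}
The quantity of interest is $\dot H_0 = X_\epsilon[H_0] = \epsilon\,X_1[H_0] + \epsilon^2\,X_2[H_0] + \dots$, whose leading term is $\epsilon\,X_1[H_0]$. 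Using the $O(\epsilon)$ relation above, $X_1[H_0] = -X_0[H_1]$, so
\begin{align*}
\langle \dot H_0\rangle_0 = \epsilon\,\langle X_1[H_0]\rangle_0 + O(\epsilon^2) = -\epsilon\,\langle X_0[H_1]\rangle_0 + O(\epsilon^2).
\end{align*}
It then remains to argue that $\langle X_0[H_1]\rangle_0 = 0$.

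The key step is therefore establishing the averaging identity $\langle X_0[f]\rangle_0 = 0$ for smooth $f$. Here I would invoke the structure of a nearly-periodic system: since every $X_0$-orbit is periodic with angular frequency $\omega_0$, one has an $S^1$-action (the $X_0$-flow modulo period) and $X_0 = \omega_0\,\partial_\theta$ for the angle coordinate $\theta$ generating this action, at least away from degenerate orbits. The bounce/orbit average is $\langle f\rangle_0(\text{orbit}) = \frac{1}{2\pi}\int_0^{2\pi} f\,d\theta$. Then $X_0[f] = \omega_0\,\partial_\theta f$, and since $\omega_0$ is constant along each orbit, $\langle X_0[f]\rangle_0 = \frac{\omega_0}{2\pi}\int_0^{2\pi}\partial_\theta f\,d\theta = 0$ because $f$ is periodic in $\theta$. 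Applying this with $f = H_1$ kills the $O(\epsilon)$ term and yields the claim. Equivalently, and perhaps more cleanly for the write-up, one can argue directly: $\frac{d}{dt}H_0(\Phi_t) = X_\epsilon[H_0](\Phi_t)$ along an $X_\epsilon$-trajectory, but to leading order trajectories are $X_0$-periodic, and integrating $X_1[H_0] = -X_0[H_1] = -\frac{d}{dt}\big|_{0} H_1$ over one unperturbed period telescopes to zero.

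The main obstacle is being careful about what "averaging along $X_0$-orbits" means and ensuring the interchange of averaging with the $\epsilon$-expansion is legitimate at the formal-power-series level — in particular, that $\omega_0$ and the period are orbit-functions (invariant under $X_0$) so that they pass through the average, and that $H_1$ is genuinely periodic along $X_0$-orbits (which follows since $H_1$ is a globally defined smooth function and the orbits are closed). A secondary subtlety is the treatment of equilibria or orbits where the $S^1$-action degenerates; for the application to the separatrix problem this is handled by restricting to the region where $u\neq 0$ and orbits are genuine bouncing loops, so I would state the proposition for the open set of periodic orbits and note that $\omega_0$ is smooth and nonvanishing there. With these caveats in place the argument is short: write $X_\epsilon[H_\epsilon]=0$ to order $\epsilon$, substitute into $\dot H_0$, and average.
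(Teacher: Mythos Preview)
Your proposal is correct and follows essentially the same route as the paper: expand $X_\epsilon[H_\epsilon]=0$ to first order to get $X_1[H_0]=-X_0[H_1]$, substitute into $\dot H_0 = \epsilon X_1[H_0]+O(\epsilon^2)$, and then observe that the orbit average of $X_0[H_1]$ vanishes because integrating $\tfrac{d}{dt}H_1$ over one $X_0$-period telescopes to zero. The paper's write-up is slightly terser (it uses the Lie derivative notation and the telescoping integral directly rather than the $S^1$-action language), but the logic is identical.
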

\begin{proof}
For each $z \in M$, let  $z(t)$ be the unique solution to  $\dot{z} = X_0(z)$ with $z(0) =z$.
By assumption this orbit is periodic, so let $T(z) = 2\pi/\omega_0(z)$ denote the period.
The rate of change of $H_0$ along $X_\epsilon$ is 
\begin{align*}
    P & \equiv \frac{d}{dt} {H_0} = \mathcal{L}_{X_\epsilon}H_0 = \mathcal{L}_{X_0}H_0 + \epsilon\,\mathcal{L}_{X_1}H_0 + \epsilon^2\mathcal{L}_{X_2}H_0 +\dots = -\epsilon\,\mathcal{L}_{X_0}H_1 + O(\epsilon^2),
\end{align*}
where we have applied energy conservation $\mathcal{L}_{X_\epsilon}H_\epsilon = 0$.
Denote the average of $P$ along $X_0$-orbits by 
\begin{align*}
    \langle P\rangle(z) = \frac{1}{T(z)}\int_0^{T(z)}P(z(t))\,dt.
\end{align*}
Then the average rate of change of $H_0$ along $X_\epsilon$ is therefore
\begin{align*}
    \langle P\rangle(z) &=- \frac{\epsilon}{T(z)}\int_0^{T(z)}\mathcal{L}_{X_0}H_1(z(t))\,dt + O(\epsilon^2)\\
    &=- \frac{\epsilon}{T(z)}\int_0^{T(z)}\frac{d}{dt}H_1(z(t))\,dt + O(\epsilon^2)\\
     &=- \frac{\epsilon}{T(z)}\bigg(H_1(z(T)) - H_1(z(0))\bigg) + O(\epsilon^2)\\
     & = O(\epsilon^2),
\end{align*}
since $z(T)) = z(0)$.
\end{proof}
\noindent
Finally we apply the previous two observations to establish suppression of type transitions in isoprominent fields.  
\begin{proposition}\label{prop:IsoprominenceMinimizesTransitions}
The bounce-averaged flux of bouncing guiding centers across the ZGC separatrix is $O(\epsilon^2)$ for isoprominent fields.
\end{proposition}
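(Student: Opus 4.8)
The plan is to combine the two preceding propositions with the structure of guiding center motion near the separatrix. The key geometric fact is that the flux of the guiding center vector field $X_\epsilon = X_0 + \epsilon X_1 + \dots$ across the separatrix $\mathcal{S}$ is controlled by the component of $X_\epsilon$ transverse to $\mathcal{S}$. By \cref{sep_energy_equiv}, when $\bm{B}$ is isoprominent the tangent space to $\mathcal{S}$ at a point $(\bm{x},u)$ with $\bm{x}\notin\Sigma^-$ coincides with the tangent space to the level set of $H_0$ through that point; consequently, $dH_0$ annihilates $T\mathcal{S}$ there, so $dH_0$ is (up to scale) a conormal to the separatrix. Therefore the transverse component of $X_\epsilon$ at such points is measured precisely by $\mathcal{L}_{X_\epsilon}H_0 = dH_0(X_\epsilon)$: this scalar vanishes iff $X_\epsilon$ is tangent to $\mathcal{S}$, and in general its size (divided by $|\nabla H_0|$) is the normal speed of guiding centers through the separatrix.

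First I would set up the flux integral. Fix a connected component of the bouncing region adjacent to a component of $\mathcal{S}$, and let the instantaneous flux across $\mathcal{S}$ be $\int_{\mathcal{S}} (X_\epsilon \cdot \hat{n})\,dA$, where $\hat n \propto \nabla H_0$ by the remark following \cref{sep_energy_equiv}. The bounce average replaces the integrand at each point by its average over the $X_0$-orbit (a bounce orbit) through that point; since $\mathcal{S}$ is $X_0$-invariant and foliated by these periodic orbits, the bounce-averaged flux is $\int (\langle X_\epsilon\cdot\hat n\rangle)\,dA$ with $\langle\cdot\rangle$ the orbit average. Pointwise, $X_\epsilon\cdot\hat n$ is proportional to $\mathcal{L}_{X_\epsilon}H_0 = P$, the quantity analyzed in \cref{general_averaged_energy}. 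Applying that proposition with $M$ a neighborhood of $\mathcal{S}$, $X_0$ the ZGC vector field (whose separatrix-adjacent bounce orbits are periodic with frequency $\omega_0$, the bounce frequency), and $H_\epsilon$ the formal guiding center energy, gives $\langle P\rangle = O(\epsilon^2)$ at each point. Hence $\langle X_\epsilon\cdot\hat n\rangle = O(\epsilon^2)$ pointwise, and integrating over the (bounded portion of the) separatrix yields a bounce-averaged flux of $O(\epsilon^2)$.

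The subtlety I expect to require care is the behavior near $\Sigma^-$ itself, where \cref{sep_energy_equiv} does not apply, $u\to 0$ may cause the normal-speed formula to degenerate, and the bounce period $T(z) = 2\pi/\omega_0(z)$ diverges logarithmically as orbits approach the separatrix. I would handle this by excising a shrinking tubular neighborhood of $\Sigma^-$: on the complement, the argument above is uniform; on the excised piece, the phase-space volume is small and one checks (using $\operatorname{div} X_0 = 0$ in the appropriate measure, or directly from the explicit ZGC equations) that the flux contribution from that region is controlled. The divergence of $T(z)$ is not actually an obstruction, because in the final estimate $T(z)$ appears in the combination $\tfrac{1}{T(z)}(H_1(z(T)) - H_1(z(0))) = 0$ exactly, per the proof of \cref{general_averaged_energy}; the $O(\epsilon^2)$ remainder comes from $\langle \mathcal{L}_{X_2}H_0 - \mathcal{L}_{X_0}H_2\rangle$ and similar bounded terms, which remain bounded on any region bounded away from $\Sigma^-$. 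A secondary point worth stating is the meaning of ``flux across the separatrix'': since $\mathcal{S}$ is a codimension-one invariant set for $X_0$, the leading flux is carried entirely by $\epsilon X_1 + \epsilon^2 X_2 + \dots$, and isoprominence is precisely what forces the $O(\epsilon)$ part of the \emph{bounce-averaged} normal flux to cancel, leaving $O(\epsilon^2)$.
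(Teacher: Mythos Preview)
Your approach is essentially the paper's: use \cref{sep_energy_equiv} to identify the separatrix normal with $dH_0$, so that the transverse drift is $\mathcal{L}_{X_\epsilon}H_0$, and then invoke \cref{general_averaged_energy} to kill the $O(\epsilon)$ piece after bounce averaging. The paper phrases this slightly differently---it says the distance from a bouncing orbit to the separatrix is proportional to $H_0(z)-H_s$ and then bounce-averages $\dot H_0$ along that \emph{bouncing} orbit---but the content is the same.

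One point to clean up: you write that ``$\mathcal{S}$ is $X_0$-invariant and foliated by these periodic orbits,'' and then bounce-average the integrand \emph{on} $\mathcal{S}$. That is not quite right: $X_0$-orbits on the separatrix are homoclinic/heteroclinic to $\Sigma^-$, not periodic, so $\langle\cdot\rangle$ is not defined there and \cref{general_averaged_energy} does not apply directly. You clearly sense this, since you later discuss the logarithmic divergence of $T(z)$ as orbits approach $\mathcal{S}$; the fix is exactly the paper's framing. Apply \cref{general_averaged_energy} on genuine bouncing orbits (which are periodic), interpret the ``flux across the separatrix'' as the bounce-averaged rate of change of the transverse coordinate $H_0-H_s$, and your argument goes through without needing a flux integral over $\mathcal{S}$ or an excision near $\Sigma^-$.
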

\begin{proof}
The algorithm developed for constructing the guiding center transformation in Ref.\,\onlinecite{Burby_Squire_2013} implies that guiding center coordinates may be chosen so that 
\begin{align}
    \dot{u} & = -\frac{\bm{B}^*}{B_\parallel^*}\cdot \nabla H_\epsilon\label{TGC_u} ,\\
    \dot{\bm{x}} & = \partial_uH_\epsilon\,\frac{\bm{B}^*}{B_\parallel^*} -\epsilon\, \zeta\,\frac{\nabla H_\epsilon\times \bm{b}}{B_\parallel^*},\label{TGC_x}
\end{align}
to all orders in $\epsilon = \rho/L$, where
\begin{align}
    \bm{B}^* & = \bm{B}+\epsilon\,\zeta\,u\,\nabla\times\bm{b} -\epsilon^2\,\zeta\,\mu\,\nabla\times\bm{\mathcal{R}}\label{B_star} ,\\
    B_\parallel^* & = \bm{B}^*\cdot \bm{b}\label{B_par_star},
\end{align}
and $\bm{\mathcal{R}}$ is Littlejohn's gyrogauge vector.\cite{Littlejohn_1984,Littlejohn_1988} The guiding center Hamiltonian $H_\epsilon$ is given by $H_\epsilon = H_0 + \epsilon\,H_1 + \dots$, where the coefficients $H_k$ in general involve high-order derivatives of the magnetic field $\bm{B}$.

By \cref{sep_energy_equiv}, in an isoprominent magnetic field the ZGC separatrix is an energy level for $H_0$. Therefore the distance between a guiding center's phase space location and the ZGC separatrix is proportional to $H_0(z) - H_s$, where $H_s$ denotes the (constant) separatrix energy. By \cref{general_averaged_energy}, the bounce-averaged rate of change of this distance is $O(\epsilon^2)$ for bouncing particles.
\end{proof}

\section{Formal Existence of Isoprominent Fields}\label{sec:FormalExistence}

In this section, the formal existence of isoprominent fields near a \textit{magnetic axis} is established. In this paper, a magnetic axis corresponds to any closed field-line of $\bm{B}$. The only requirement we assert to guarantee the existence of an isoprominent field is that the magnitude of $\bm{B}$ on-axis has only finitely many critical points. We begin by describing useful coordinates in a neighborhood of the axis and then demonstrate that the requirements of isoprominence can be met at each order of a series expansion about the axis.

\subsection{The Near-Axis Framework}    
    
When $r_0:S^1 \to \R^3$ is a closed field-line of $\bm{B}$, convenient, toroidal coordinates in its neighborhood can be obtained usig a \emph{moving frame} such as the Frenet-Serret frame. Such a frame is obtained using the first three derivatives of $r_0(s)$ (see, for instance, \citeInline{Bishop75}). Specifically, assume that the curve $r_0 \in C^3(S^1, \R^3)$ is parameterized by arc length $s$ and define the unit tangent, $\hatt(s) = r_0'$, normal, $\hn(s)$, and bi-normal, $\hb(s)$, vectors. Taking these to be row vectors, they satisfy the matrix \textsc{ode}
\begin{equation}
	\frac{d}{ds} \left(\begin{array}{c}
	\hatt \\ \hn \\ \hb
	\end{array}\right)
	=  \left(\begin{array}{ccc}
        	0 & \kappa(s) & 0 \\
        	-\kappa(s) & 0 & \tau(s) \\
        	0 & -\tau(s) & 0
        	\end{array}\right)\left(\begin{array}{c}
        	\hatt \\ \hn \\ \hb 
    	\end{array}\right),
\end{equation}
where $\kappa(s)$ is the curvature and $\tau(s)$ is the torsion:
\[
	\kappa = |r_0^{\prime\prime}|,\qquad \tau = \frac{(r_0^\prime \times r_0^{\prime\prime})\cdot r_0^{\prime\prime\prime}}{\kappa^2}.
\]
In this paper, we will assume, for simplicity
that $\kappa(s) \neq 0$, so that the normal vector and the torsion are well-defined.
	
The Frenet-Serret frame defines a local embedding
\begin{equation}\label{eq:EuclidMapping}
		\pi_{fs} : D^2\times \bS^1 \to \R^3,\qquad (x,y,s) \mapsto r_0(s)+ x \hn + y \hb.
\end{equation}
In other words, $\pi_{fs}$ is an embedding of the trivial disk bundle $ D^2\times \bS^1$
into a tubular neighborhood of $r_0(s)$ in $\R^3$. 
In these coordinates the metric is
\begin{equation}\label{eq:FSMetric}
\begin{aligned}
		g &= \left(\rho^2 + \tau^2(x^2 + y^2) \right) ds^2 +2\tau (x ds dy -y ds dx) + dx^2 + dy^2, \\
		\rho &\equiv 1-\kappa x.
\end{aligned}
\end{equation}

If $r_0$ is not $C^3$, or, more crucially, if $r_0$ has any inflection points, i.e., points with $\kappa = 0$, then the Frenet-Serret frame does not exist. Even in such cases, however, there are still plenty of choices for an orthonormal frame based on the curve $r_0(s)$. Moreover, with some thought, such a frame can be defined that also has an orthogonal induced metric (in contrast to \eqref{eq:FSMetric}). Such a frame is called \emph{rotation minimizing}. Coordinates based on a rotation-minimizing frame are called \textit{Bishop coordinates}, and these exist when $r_0$ is $C^2$.\cite{Bishop75} Bishop coordinates were used specifically for expansions about magnetic axes in \citeInline{duignanNormalFormsNearaxis2021}.

\subsection{Proof of Formal Existence}\label{sec:ProofOfFormalExistence}

In this section we prove that isoprominent fields exist, at least formally, near a magnetic axis, provided that the magnitude of $\bm{B}$ on-axis has only finitely many critical points. The proof is constructive using the Frenet-Serret coordinates, although, it is not dependent on this particular choice and could be shown with the less restrictive Bishop coordinates. The key idea is to expand in powers of distance from the axis and demonstrate that, at each order, the requirements of isoprominence given in \cref{def:isoprominence} can be satisfied by specifying only the $s$ component of $\bm{B}$ on $\Sigma^-$.
We will then show that $\bm{B}$ can be made divergence-free at each order by specifying either the $x$ or $y$ component of $\bm{B}$. Hence, the formal existence of divergence-free isoprominent fields is guaranteed.

Using Frenet-Serret coordinates $(x,y,s)$, write the contravariant form of magnetic field as
\begin{align}
    \bm{B} &= B^s\partial_s + B^x\partial_x + B^y\partial_y\\
    & = B_0(s)\partial_s + \widetilde{\bm{B}}(x,y,s),
\end{align}
where $B_0(s) = |\bm{B}(s,0,0)|$ is the on-axis field strength and $\widetilde{\bm{B}}$ contains all of the linear and higher-order (in $x,y$) terms of the field. Similarly, write the magnetic field strength as
\begin{align}
    |\bm{B}| = B_0(s) + \widetilde{B}(x,y,s),
\end{align}
where $\widetilde{B}$ contains all of the linear and higher-order (in $x,y$) contributions to the field strength. Assume $B_0(s)$ has only finitely-many critical points $\{s_j\}$ and that $\Sigma^-$ intersects the axis at the points $s_j$ transversely. We will show that isoprominence may be satisfied by specifying $B^s$ on each constant $s$ surface $s = s_j$.

If $s_0 \in \{s_j\}$, then when $\Sigma^-$ is transverse to to the axis at $s_0$, 
it can be represented locally by a graph 
\[
    \Sigma^-_0 := \Sigma^- \cap U_{s_0} = \{(x,y,s)\mid s = \sigma(x,y)\},
\]
where $\sigma(0,0) = s_0$ and $U_{s_0}$ is some neighborhood of $s_0$.


For any function $F$, let $F_k(x,y,s)$ represent the $k^{th}$ term of the power series (in $x,y$) representation of $F$. 
A series representation for the local graph of $\Sigma^-$, 
\[
    \sigma = s_0 + \sigma_1 + \sigma_2 + \dots,
\]
can be obtained by solving the critical value equation
\begin{align}
    0&=\bm{B}\cdot \nabla|\bm{B}|(x,y,\sigma(x,y))\nonumber\\
    & = B^s\,B_0^\prime + B^s\,\partial_s\widetilde{B} + \widetilde{\bm{B}}^\perp\cdot\nabla\widetilde{B} ,\nonumber \\
    \Rightarrow B_0^\prime(\sigma)& = - \partial_s\widetilde{B}(x,y,\sigma) - \frac{\widetilde{\bm{B}}^\perp\cdot\nabla\widetilde{B}}{B^s} (x,y,\sigma), \label{sigma_eqn}
\end{align}
where $\widetilde{\bm{B}}^\perp = B^x\partial_x + B^y\partial_y$. A calculation reveals, 
\begin{align*}
    \left\{ B_0^\prime(\sigma)\right\}_i &= B_0^{\prime\prime}(s_0)\sigma_i + J_{<i} , \\
    \left\{ - \partial_s\widetilde{B}(x,y,\sigma) - \frac{\widetilde{\bm{B}}^\perp\cdot\nabla\widetilde{B}}{B^s} \right\}_i &= -\partial_s\tilde{B}_i(x,y,s_0) - \frac{\bm{B}_1^\perp(x,y,s_0)\cdot \grad \tilde{B}_i(x,y,s_0)}{B_0^s(s_0)} \\
    &\qquad - \frac{\bm{B}_i^\perp(x,y,s_0)\cdot \grad \tilde{B}_1(x,y,s_0)}{B_0^s(s_0)} + F_{<i},
\end{align*}
where $F_{<i}, J_{<i}$ contain $\tilde{B}_j, \sigma_j, \bm{B}^\perp_j$ with orders $j < i$. It follows from these expansions that the $i^{th}$ order terms in \eqref{sigma_eqn} can be solved for $\sigma_i$, given $\tilde{B}_j(s_0), \bm{B}^\perp_j(s_0)$ for $j \leq i$ and their derivatives. Explicit expressions for the first two orders are given in \cref{sec:ExplicitOrder2}.

Given such a series for $\sigma$, we can now establish how the expansions of $\tilde{B}$ and $\bm{B}$ relate. By definition, the two series must obey the algebraic formula $ B_0^s + \tilde{B} = \sqrt{g(\bm{B},\bm{B})}$. Note that the metric \eqref{eq:FSMetric} only contains terms of at most two in $x,y$. Splitting $g$ into these different orders as $g = g_0 + g_1 + g_2$, we obtain at each order,
\begin{equation}\label{eq:BsfromB}
	B_i = \frac{g_0(\bm{B}_i,\bm{B}_0)}{B_0} + G_{<i} =  B^s_i + G_{<i},
\end{equation}
where $G_{<i}$ contains terms of $\bm{B}_j$ for $j<i$. It follows that $B_i$ is equivalent to $B_i^s$ modulo terms in lower-order coefficients. 

Next, we impose the isoprominence condition $B(x,y,\sigma(x,y)) = h$, with $h$ constant. Expanding we have
\[ \left\{ B(x,y,\sigma) \right\}_i = B_i(x,y,s_0) + B_0^\prime(s_0) \sigma_i + H_{< i}, \]
where $H_{<i}$ contains terms in $B_j, \sigma_j$ for $j < i$. By construction, $s_0$ is a critical point of $B_0$, that is, $B_0^\prime(s_0) = 0$. Moreover, we have shown that $\sigma_j$ depends on terms of $\bm{B}_k, B_k$ for $k\leq j$, and that $B_j$ is equivalent to $B_j^s$ modulo terms of lower-order coefficients. It follows that
\begin{equation}\label{eq:isoprominenceExpanded}
	\left\{ B(x,y,\sigma) \right\}_i = B_i^s(x,y,s_0) + H_{< i},
\end{equation}
where $H_{<i}$ contains terms of $\bm{B}_j(x,y,s_0)$ with $j < i$. Hence, we can assert $B_i(x,y,\sigma(x,y)) = 0$ for each $i$ simply by restricting $B^s(x,y,s)$ on $s = s_0$.

The above procedure shows that isoprominence can be satisfied by suitably restricting $B^s(x,y,s)$ on each of $s = s_j$. In the process, the various strata of $\Sigma^-$ local to the magnetic axis, namely $\Sigma_j^- := \Sigma^-\cap U_{s_j}$ for some neighborhood $U_{s_j}$ of $s_j$, are computed. With the suitable restrictions on $B^s(x,y,s)$ made, an isoprominent field can then be obtained by enforcing $B_i^s$ as an interpolation between each restriction on $s = s_j$.

Finally, we show how to ensure the isoprominent field is divergence free. We again proceed by formal expansion, assuming now that $B_s(x,y,s)$ is fixed to ensure isoprominence. 
To impose $\nabla\cdot \bm{B} = 0$ we use the volume element
\begin{equation}\label{eq:Volume}
    \Omega = \rho\,dx\wedge dy\wedge ds .
\end{equation}
The divergence-free condition on $\bm{B}$ becomes 
$\partial_s(\rho B^s) + \partial_x(\rho\,B^x) + \partial_y(\rho\,B^y) = 0$.
Expanding this condition in its power series, we obtain the sequence of constraints
\begin{equation}\label{eq:DivZero}
    \partial_s(\rho\,B^s)_k + \partial_x(\rho\,B^x)_{k+1} + \partial_y(\rho\,B^y)_{k+1} = 0,\quad k = 0,1,2,\dots.
\end{equation}
This is merely a single condition on each of the coefficients of $B^x_{k+1}, B^y_{k+1}$. Hence, at each order we can ensure that $B^s_i$ is chosen for isoprominence, and the single condition on $B^x_{i+1}, B^y_{i+1}$ for incompressibility is achieved.

\section{Examples of Isoprominent Fields}\label{sec:Examples}

Having established the formal existence of isoprominent fields near a magnetic axis, we explore in this section some families of example fields. These families of examples satisfy isoprominence to order three near an axis and are divergence free to all orders. The construction is done in the Frenet-Serret coordinates $(x,y,s)$ of \eqref{eq:EuclidMapping}.

It was demonstrated in \cref{sec:ProofOfFormalExistence}, under the assumption that $\Sigma^-$ intersects the magnetic axis transversely at finitely many points $(0,0,s_j)$, that isoprominence can be satisfied by specifying the $s$ component of $\bm{B}$ on the surfaces $s = s_j$. Moreover, the divergence-free condition to a given order can be be satisfied by specifying some of the coefficients of the $x,y$ components of $\bm{B}$. To create the examples, we begin by first imposing this latter condition before ensuring the isoprominence is satisfied.

\subsection{Divergence-Free condition}\label{sec:divfree}

To impose $\nabla\cdot \bm{B} = 0$ we will proceed as in \cref{sec:ProofOfFormalExistence}, using the the volume element \eqref{eq:Volume}.
This gives the equivalent 
sequence of constraints \eqref{eq:DivZero}
At each order these become
\begin{equation}\label{eq:DivergenceCondition}
\begin{aligned}
    \partial_x B^x_1 + \partial_y B^y_1 &= - B_0^\prime ,\\
    \partial_x B^x_n + \partial_y B^y_n &= -\partial_s\left( B^s_{n-1} - \kappa x B^s_{n-2} \right) + \partial_x \left( \kappa x B^x_{n-1} \right) + \partial_y\left( \kappa x B^y_{n-1} \right),
\end{aligned}
\end{equation}
for $n=2,3,\ldots$.
The first equation is a differential equation for $B^x_1,B^y_1$, and this has the general solution
\begin{align}\label{eq:Bxy1_formula}
    \begin{pmatrix} B^x_1\\ B^y_1 \end{pmatrix} 
        = -\frac{1}{2}B_0^\prime(s)\begin{pmatrix} x\\ y\end{pmatrix} + \frac{1}{2}\begin{pmatrix} \Delta_1(s) & \mu_1(s)\\ \nu_1(s) & -\Delta_1(s)\end{pmatrix}
        \begin{pmatrix}x\\y \end{pmatrix},
\end{align}
where $\Delta_1,\mu_1$, and $\nu_1$ are free periodic functions of $s$. The second equation is a differential equation for $B^x_n,B^y_n$, and using Euler's theorem for homogeneous functions, the solution can be written
\begin{equation}\label{eq:Bxy2_formula}
  \begin{pmatrix}
    B^x_n\\ B^y_n
    \end{pmatrix} = -\frac{1}{n+1}\partial_s[B^s_{n-1} - \kappa\,x\,B_{n-2}^s]\begin{pmatrix}x\\ y \end{pmatrix} + \kappa\,x\,\begin{pmatrix} B^x_{n-1} \\ B^y_{n-1}\end{pmatrix} 
    + \begin{pmatrix} 
        \sum_{j=0}^{n-1} \frac{1}{n-j} \Delta_{n,j}(s) x^{n-j}y^j + \mu_n(s) y^n \\
        \nu_n(s) x^n - \sum_{j=1}^n \frac{1}{j}\Delta_{n,j-1}(s) x^{n-j} y^{j}    
    \end{pmatrix},
\end{equation}
where $\Delta_{n,j}, \mu_n$, and $\nu_n$ are free periodic functions of $s$.

We will require isoprominence only to a fixed order, say $n$. In this case, the expansion \eqref{eq:Bxy2_formula} can be carried out to degree $n$, and we can obtain a field that is still divergence free to all orders of the form
\[
    \bm{B} = \sum_{i=0}^n \bm{B}_i + \bm{B}_{\Delta}.
\]
Here the $\bm{B}_i$ are the terms from \cref{eq:Bxy1_formula}--\cref{eq:Bxy2_formula} which guarantee $\nabla \cdot \bm{B} = 0$ to degree $n-1$ in $x,y$. The field $\bm{B}_{\Delta}$ is homogeneous degree $n+1$ in $x,y$ and is to be determined.

To satisfy the divergence free condition at degree $n$, we require $B_\Delta$ satisfy \eqref{eq:DivergenceCondition}; thus it must be of the form \eqref{eq:Bxy2_formula} with
$n \to n+1$.

With $\bm{B}$ divergence free to degree $n$ in $x,y$, the remaining degree $n+1$ terms---obtained by requiring $B^x_{n+2} = B^y_{n+2}=0$ in \eqref{eq:DivergenceCondition}---are
\begin{equation}\label{eq:BdEqn}
	\partial_x \kappa x B_\Delta^x + \partial_y \kappa x B_\Delta^y = \partial_s( B_\Delta^s - \kappa x B_n^s ).
\end{equation}
To solve this equation we can choose $B_\Delta^s = 0$; this also ensures there are no terms of degree $n+2$ in $x,y$ in the divergence-free equation. Using this ansatz, \eqref{eq:BdEqn} becomes
\begin{align*}
	B_\Delta^x &= - x\left( \partial_x B_\Delta^x + \partial_y B_\Delta^y +\kappa^{-1}\partial_s \left( \kappa  B_n^s\right) \right).
\end{align*}
The free functions $\mu_{n+1},\nu_{n+1},\Delta_{n+1,i}$
drop out of the term $\partial_x B_\Delta^x + \partial_y B_\Delta^y$ as they come from solving $\partial_x B_\Delta^x + \partial_y B_\Delta^y = 0$ at degree $n+1$ in $x,y$.  We can choose $\mu_{n+1}(s) = 0$ and define $\delta := \sum_{j=0}^{n} \frac{1}{n+1-j} \Delta_{n+1,j}(s) x^{n-j}y^j$, 
and recall $B_\Delta^x$ is of the form \cref{eq:Bxy2_formula}, to obtain
\begin{align*}
	 \delta &= - \left(\partial_x B^x_\Delta + \partial_y B_\Delta^y + \kappa^{-1}\partial_s\left(\kappa B^s_n\right) -\frac{1}{n+2}\partial_s\left[ B_n^s - \kappa x B_{n-1}^s \right] + \kappa B_n^x\right), \\
	 	&= \frac{n+3}{n+2} \partial_s\left[ B_n^s - \kappa x B_{n-1}^s \right] - \partial_x\left(\kappa x B_n^x\right) - \partial_y \left( \kappa x B_n^y \right) - \kappa^{-1}\partial_x\left( \kappa B_n^s \right) - \kappa B_n^x .
\end{align*}
It is clear that this can be solved for the coefficients $\Delta_{n+1,j}$ since the right hand side does not depend on these coefficients. Hence the desired $\bm{B}_\Delta$, guaranteeing that $\bm{B}$ is divergence-free to all orders, can be found.

\subsection{First-order terms}\label{sec:firstOrderTerms}

Given any on axis field, $B_0(s)$, it is possible to construct a family of isoprominent fields to first order in the radius for any choice of on-axis rotation number $\iota_0$. Using $\iota_0$ as a parameter in the family ensures that it can be adjusted to study physically relevant phenomena that occur at resonance when $\iota_0$ is rational.

Finding the on-axis rotation number $\iota_0$ can always be done with Floquet theory. However, generally this involves numerical solution of a non-autonomous linear system. Ideally we would like a family of examples where the rotation number is known exactly.

From \eqref{eq:Bxy1_formula}, the field-line dynamics to first order in $x$ and $y$ is
\begin{align*}
	\dot{x} &= \tfrac12\left[ \Delta_1(s) - B_0^\prime(s)\right] x + \tfrac12 \mu_1(s) y ,\\
	\dot{y} &=\tfrac12 \nu_1(s) x -\tfrac12 [\Delta_1(s) + B_0^\prime(s)] y , \\
	\dot{s} &= B_0(s).
\end{align*}

Since $ B_0(s) > 0 $ we can use $s$ as a new time variable. Moreover, we can write
$$
    B_0(s) = B_0(0)e^{2 F(s)}, \quad F(T) = F(0) = 0 ,
$$
so that $F$ represents the on axis variation of the field strength. 
Denoting $d/ds =: {}^\prime$ gives
$$
\left(\begin{array}{c}
	x^\prime \\ y^\prime \end{array}\right) 
	= \left(\begin{array}{cc}
            \tilde{\Delta}_1 - F^\prime & \tilde{\mu}_1  \\
            \tilde{\nu}_1 & -\tilde{\Delta}_1 - F^\prime
        \end{array}\right)\left
        (\begin{array}{c} x \\ y \end{array}\right)  ,
$$
where $\tilde{\Delta}_1 = \frac{\Delta_1}{2 B_0},\, \tilde{\mu}_1 = \frac{\mu_1}{2 B_0},\, \tilde{\nu}_1 = \frac{\nu_1}{2 B_0}$.

We will choose $\tilde{\Delta}_1,\tilde{\mu}_1,\tilde{\nu}_1$ wisely so that the system has a rotation number $\iota_0$. One such choice is
\begin{equation}\label{eq:FreeFunctions}
\begin{aligned}
    \tilde{\mu}_1 &= c_1 e^{2F} ,\\ 
    \tilde{\mu}_1 \tilde{\nu}_1 &= -\left(\tfrac{2\pi\iota_0}{T}\right)^2 +  F^{\prime\prime} - \tilde{\Delta}_1^\prime - (\tilde{\Delta}_1 - F^\prime)^2,
\end{aligned}
\end{equation}
where $c_1$ is a constant.
A computation reveals that the field-line equations become
a single second-order ode:
$$
x^{\prime\prime} = -\left(\tfrac{2\pi\iota_0}{T}\right)^2 x.
$$
It can be concluded that the two eigenfunctions of the linearized system are
$\exp(\pm i \tfrac{2\pi\iota_0}{T} s)$.
Consequently, the Floquet  exponents of the system are simply $\pm i \iota_0$, as desired.

\subsection{Higher-order Isoprominence}
In \cref{sec:ProofOfFormalExistence} it was determined that isoprominence can be met by suitably restricting the $s$ component of $\bm{B}$. To ensure isoprominence is satisfied to some order $n$, the equations of \cref{sec:ProofOfFormalExistence} at each order $j \leq n$ must be satisfied.
This process is recursive: the equation at order $j$ depends on the solution to the equations of lower order. The resulting iterative scheme is sketched in \cref{alg:cap}.

\begin{algorithm}[H]
\caption{Isoprominence iteration scheme}\label{alg:cap}
\begin{algorithmic}
\Require Curvature and torsion of magnetic axis, on-axis field strength $B_0(s)$ with finitely many  maxima.
\For{$ j = 1,2,\dots,n$}
    \For{$ s_i $ maxima of $B_0(s)$}
        \State Compute $B_j(s_i)$ in terms of $B_k^s(s_i)$ for $k\leq j$ using \eqref{eq:BsfromB}
        \State Compute $\sigma_j$ in terms of $B_k^s(s_i)$ for $k\leq j$ using \eqref{sigma_eqn}
    	\State Compute $B_j^s(s_i)$ in terms of $B_k^s(s_i)$ for $k < j$ using \eqref{eq:isoprominenceExpanded}
	\EndFor
	\State Choose a $B^s_j(s)$ so that it interpolates the values specified by each $B^s_j(s_i)$.
\EndFor
\end{algorithmic}
\end{algorithm}

In \cref{sec:ExplicitOrder2} the above algorithm is conducted explicitly to second order giving
\begin{align*}
	B^s_1(x,y,s_i) &= \kappa(s_i)\,x\,B_0(s_i) , \\
	B^s_2(x,y,s_i) &= \kappa^2(s_i)\,x^2\,B_0(s_i) -\frac{1}{2}\tau^2(s_i)(x^2+y^2)\,B_0(s_i)-\frac{1}{2}\left(\frac{(\partial_sB^s_1)^2}{B_0^{\prime\prime}}+\frac{(B^x_1)^2}{B_0}+\frac{(B^y_1)^2}{B_0}\right)\bigg|_{s=s_i} .
\end{align*}
The conditions to arbitrary order can be computed explicitly by the Mathematica notebook available in \citeInline{duignanCodesInvestigatingIsoprominence2022}.

\subsection{Example Fields}

In this section we obtain two simple examples of isoprominent fields as degree-$n$ approximations
about a circular axis. The constructions of these examples follows the procedure:
\begin{enumerate}
    \item Choose a magnetic axis and the on-axis magnetic field, $B_0(s)$. 
    \item Obtain, for each maximum, $s_i$, of $B_0(s)$ the $s$-component $B^s(s_i)$ to satisfy the isoprominence condition to order $n$. 
    \item Interpolate between the values of $B^s(s_i)$ for each $i$ to find  $B^s(s)$.
    \item Obtain the $x,y$ components of the magnetic field using the results in \cref{sec:divfree} to ensure the vector field is divergence free and of polynomial degree $n+1$. 
    \item Choose the free functions $\Delta_{j,k},\mu_k$, and $\nu_k$ arising from the divergence-free condition. 
\end{enumerate} 

In the first example, $B_0(s)$ has only one maxima $s_1$, while the second example has two maxima

\subsubsection*{Example 1}

As a relatively simple family of example fields we choose the following.
\begin{itemize}
	\item[-] The magnetic axis is simply a circle of radius $1/\kappa_0$ so that $\tau(s) = 0$, and $\kappa(s) = \kappa_0$.
	\item[-] The on-axis magnitude is $B_0(s) =  e^{2 (F(s)-F(0))}$ where $F(s) = \tfrac{1}{10} \cos(s)$. Thus it has a single maximum at $s_1 = 0$.
\end{itemize}

The isoprominence condition to order three is used to give $B^s$ at $s = s_1$. To obtain $B^s$ in a neighborhood of the axis we interpolate simply by choosing the degree-$n$
power series terms $B_n^s(s)$ to be independent of $s$, that is, $B_n^s(s) = B_n^s(0)$ for all $s$ and $n \ge 1$.

The first-order terms $B^x_1, B^y_1$ are taken in accordance with \cref{sec:firstOrderTerms} so that the on-axis rotation is $\iota_0$; a parameter of the example. We choose the free function $\Delta_1=0$. 

The free second and third-order terms in \eqref{eq:Bxy1_formula}, resulting from the divergence-free condition, are set to $0$, that is $\Delta_{j,k}=\mu_k =\nu_k = 0$ for $k =2,3$, and all values of $j$.
The fourth-order terms are taken as discussed in \cref{sec:divfree}, to ensure the magnetic field is divergence-free and is a polynomial of degree four. 

The only remaining parameters are the on-axis rotation number $\iota_0$, the free parameter $c_1$ in \eqref{eq:FreeFunctions}, and the curvature of the axis $\kappa_0$. Hence, this is a relatively simple, three-parameter family of divergence-free magnetic fields, that satisfies isoprominence to third order. The formulas for this family are explicitly given in the Mathematica notebook provided in \citeInline{duignanCodesInvestigatingIsoprominence2022}.

Figures~\ref{fig:PoinSecExam1}-\ref{fig:absB_vs_xs_Ex1} show this example for the choice $(\iota_0, c_1, \kappa_0) = (1/\phi,\phi,1)$, where $\phi = \tfrac12(1+\sqrt{5})$ is the golden ratio. A Poincar\'e section of the magnetic field lines is given in \cref{fig:PoinSecExam1}. 
Figure~\ref{fig:Axis_Bsurface} shows the circular magnetic axis, the surface $\Sigma^-$ near the axis, and---using color---the magnitude of $\bm{B}$ on $\Sigma^-$. Notably the deviation of $|\bm{B}|_{\Sigma^-}$ from a constant for $x,y \sim \tfrac12$ is less than $5\%$, which is consistent with the third-order isoprominence of the field. 
Finally, \cref{fig:absB_vs_xs_Ex1} shows $|\bm{B}|$ along orbits that start at $(0,x_0,0)$ for $-0.5 < x_0 < 0.5$. Notice that since field lines are not necessarily in $s$, the magnitude of $\bm{B}$ along a field line will not be periodic in $s$.

\begin{figure}[ht]
    \centering
    \includegraphics[width=0.8\linewidth]{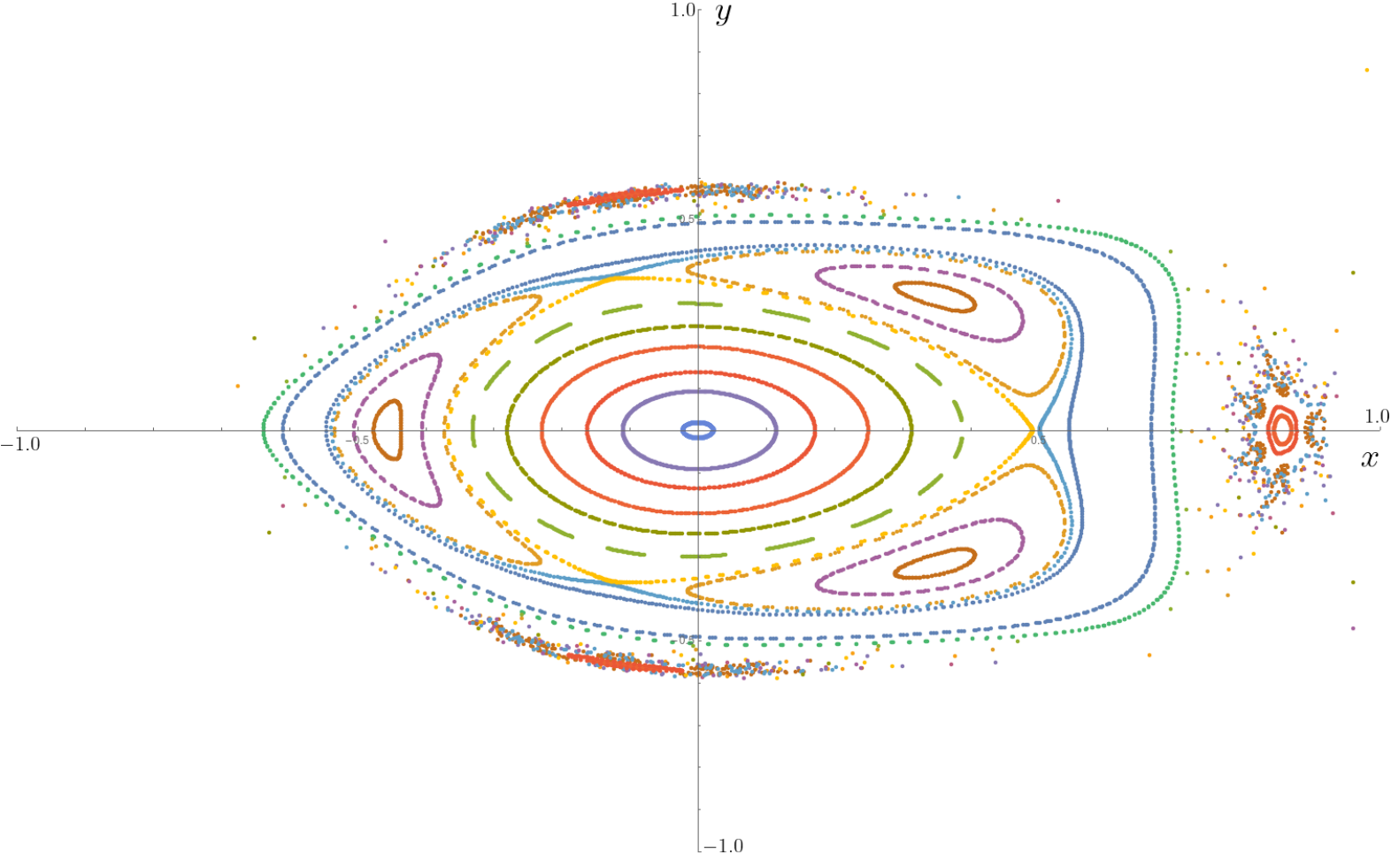}
    \caption{\footnotesize A Poincar\'e section at $s=0$ for the magnetic field lines of example 1.}
    \label{fig:PoinSecExam1}
\end{figure}

\begin{figure}[ht]
    \centering
    \includegraphics[width=0.8\linewidth]{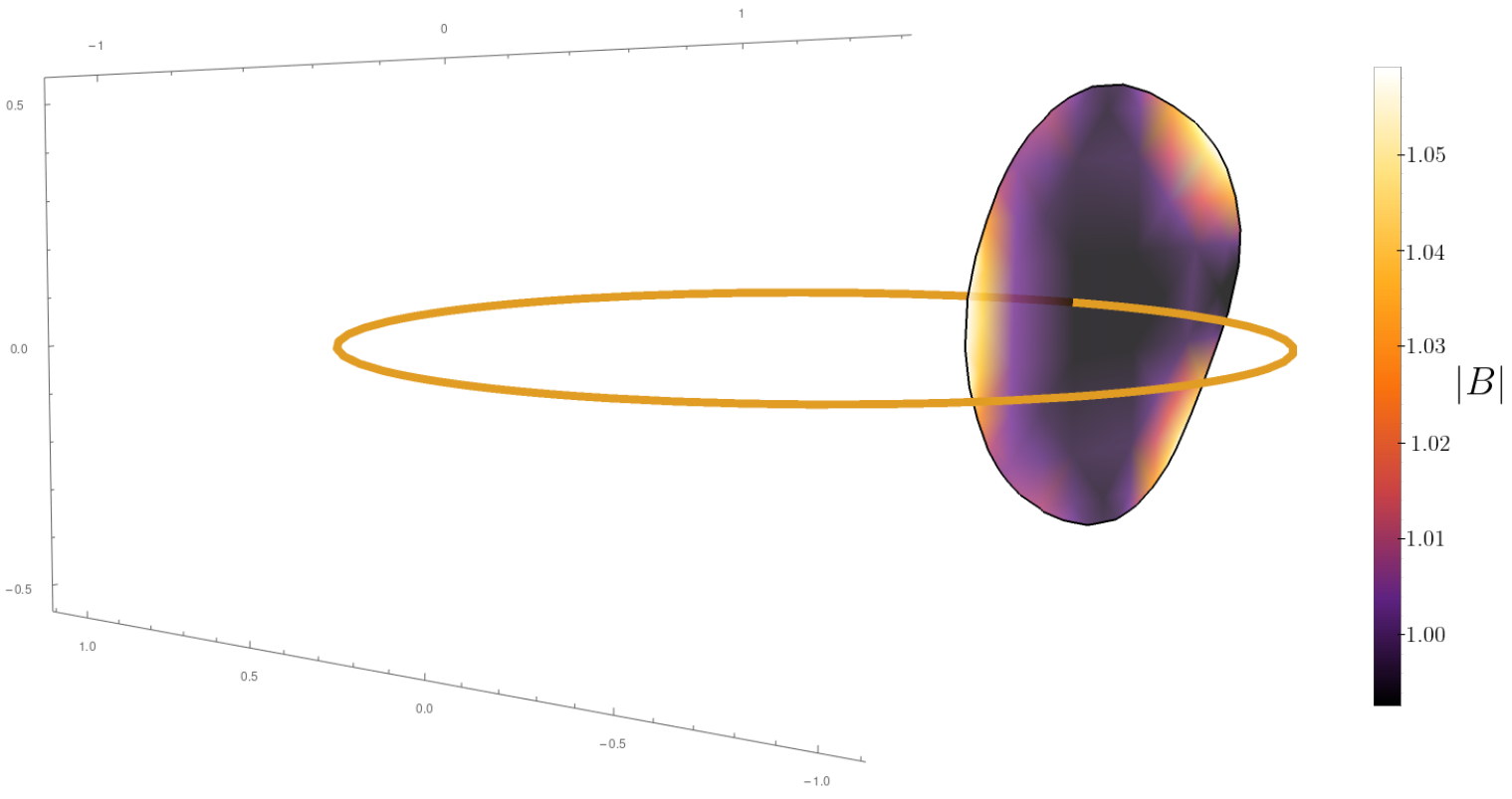}
    \caption{\footnotesize The approximate surface $\Sigma^-$, computed from the isoprominent condition for example 1. A heat map of $|\bm{B}|$ on this surface shows a small variation away from $|\bm{B}|=1$.}
    \label{fig:Axis_Bsurface}
\end{figure}

\begin{figure}[ht]
    \centering
    \includegraphics[width=0.9\linewidth]{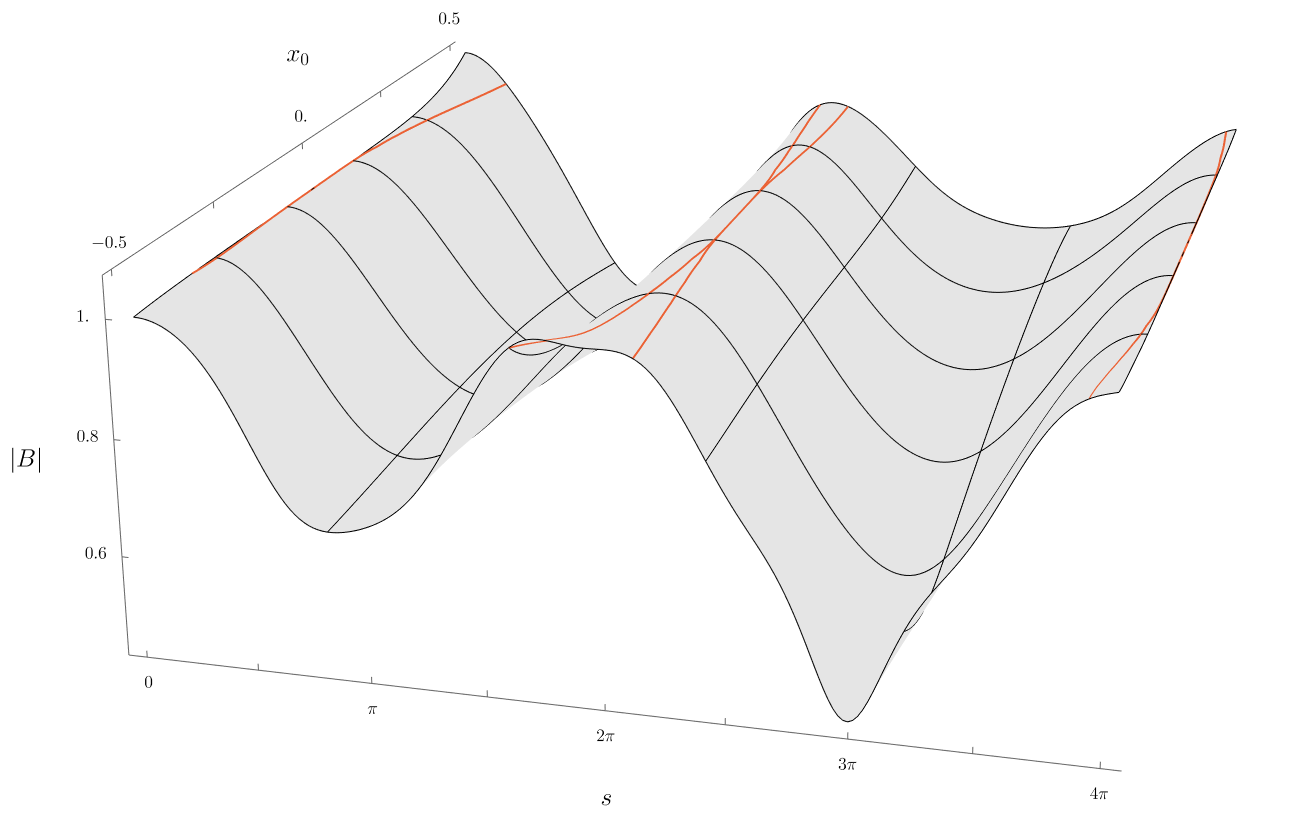}
    \caption{\footnotesize The magnitude of $\bm{B}$ along field lines as a function of the initial condition $(0,x_0,0)$ for example 1. In red are the contours $|\bm{B}| = 1$, the on-axis peak value. 
    These contours split as $x_0$ grows, indicating that there is a region with $|\bm{B}| > 1$ on $\Sigma^-$ due
    the third order truncation of the series. }
    \label{fig:absB_vs_xs_Ex1}
\end{figure}

\subsubsection*{Example 2}

As a second simple family of example fields we start with the following.
\begin{itemize}
	\item[-] The magnetic axis is a circle of radius $1/\kappa_0$ so that $\tau(s) = 0$ and $\kappa(s) = \kappa_0$.
	\item[-] The on-axis magnitude is $B_0(s) =  e^{2 (F(s)-F(0))}$ where $F(s) = a \cos(s) + b\cos(2s) $, with $ 0 < a < 4b $.
\end{itemize}

Since $a < 4b$, the on-axis field $B_0(s)$ has two maxima at $s_1 = 0,\, s_2 = \pi $. Enforcing the isoprominence condition to third order gives the values of $B^s(0,x,y)$ and $B^s(\pi,x,y)$. To obtain $B^s$ for all $s$ we interpolate by taking 
\[ 
    B_n^s(s,x,y) = \tfrac12 B_n^s(0,x,y)(1+\cos(s)) + \tfrac12 B_n^s(\pi,x,y)(1-\cos(s)). 
\]
for $n \ge 1$.

As in the first example, the first-order terms $B^x_1, B^y_1$ are computed using \cref{sec:firstOrderTerms} so that the on-axis rotation is given by the parameter $\iota_0$. The free function $\Delta_1 = 0$ and $\Delta_{j,k} = \mu_k = \nu_k = 0$ for $k =2,3$, and all values of $j$.
The fourth-order terms are computed using \cref{sec:divfree} so that $\bm{B}$ is divergence-free and a polynomial of degree four in $x,y$.

The remaining freedoms are the on-axis rotation number $\iota_0$, $c_1$ \eqref{eq:FreeFunctions}, $\kappa_0$, and the parameters $0 < a < 4b$. Hence, this corresponds to a five-parameter family of divergence-free magnetic fields that satisfies isoprominence to third order.

Figures~\ref{fig:PoinSecExam2}-\ref{fig:absB_vs_xs_Ex2} show this field for  $(\iota_0, c_1, \kappa_0,a,b) = (\tfrac{1}{\phi}, \phi,1, \tfrac1{10}, \tfrac1{10})$, where $\phi = \tfrac12(1+\sqrt{5})$ is the golden ratio. A Poincar\'e section of the magnetic field lines is given in \cref{fig:PoinSecExam2}. Figure~\ref{fig:Axis_Bsurface_Ex2} shows the magnetic axis, the two $\Sigma^-$ surfaces near the axis, and a heat map of the variation in $|\bm{B}|$ on these surfaces. Note that $|\bm{B}|$ varies by less than $7\%$ up to $x,y \sim \tfrac12$, which is consistent with the third-order isoprominence of the field. 
Finally, \cref{fig:absB_vs_xs_Ex2} shows $|\bm{B}|$ along orbits with initial conditions $(0,x_0,0)$, for $-0.5 < x_0 < 0.5$. 

\begin{figure}[ht]
    \centering
    \includegraphics[width=0.8\linewidth]{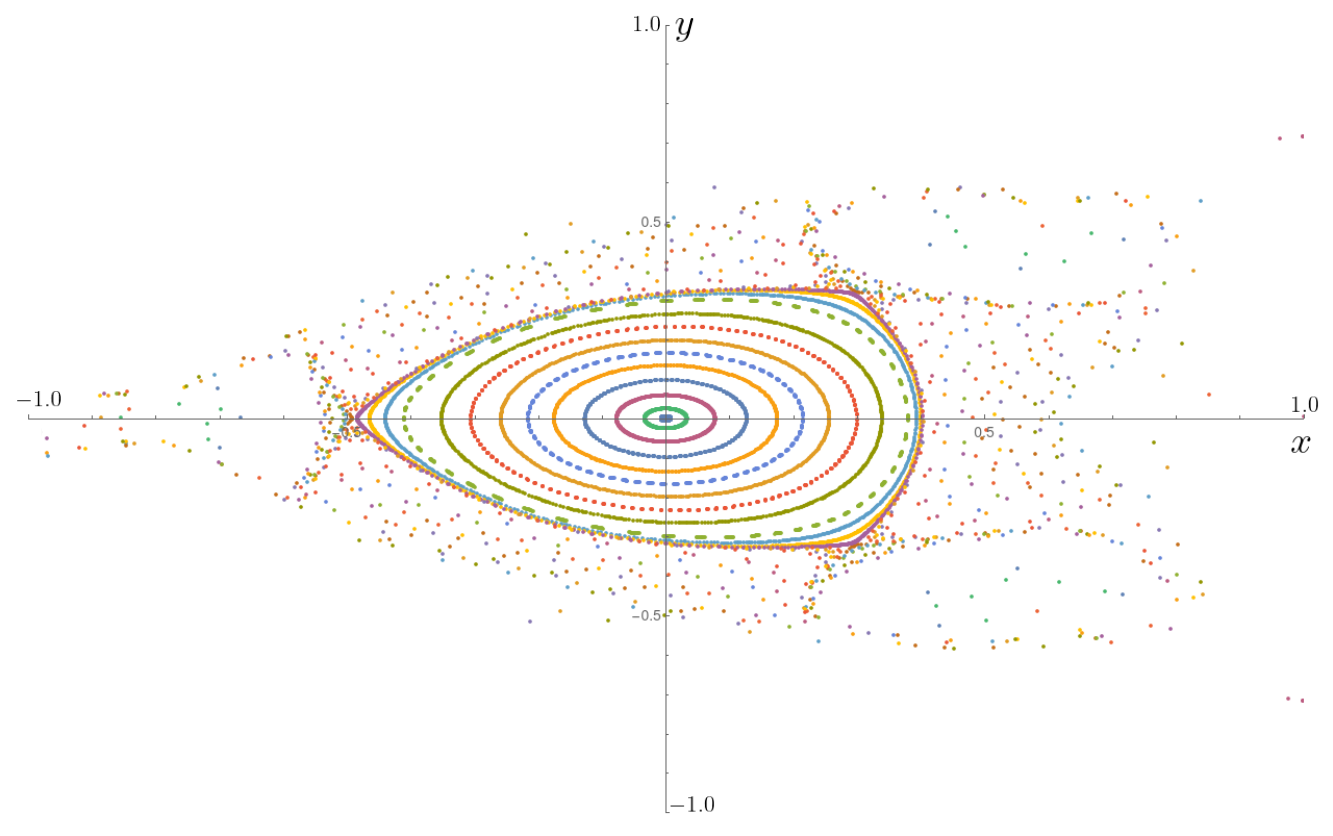}
    \caption{\footnotesize A Poincar\'e section at $s=0$ for the magnetic field lines of example 2.}
    \label{fig:PoinSecExam2}
\end{figure}

\begin{figure}[ht]
    \centering
    \includegraphics[width=0.9\linewidth]{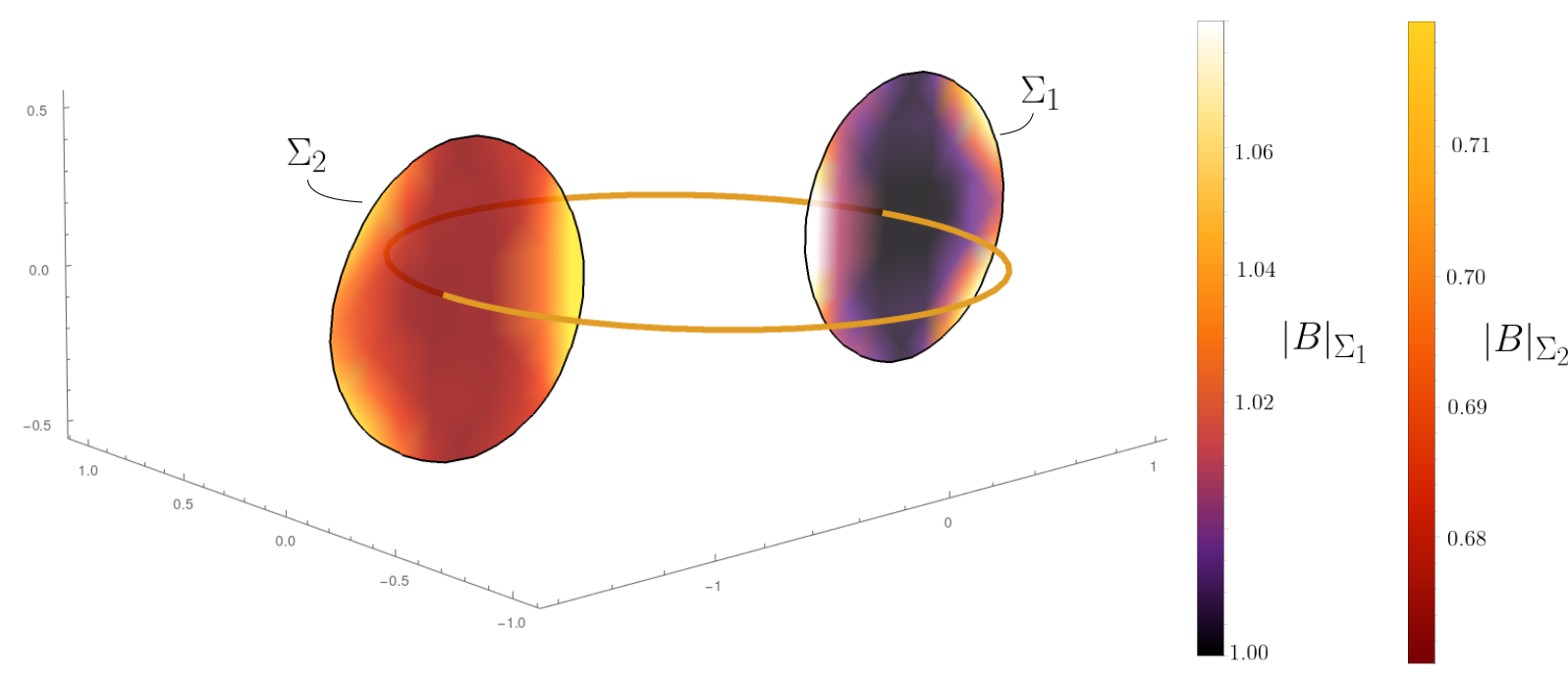}
    \caption{\footnotesize The two strata of $\Sigma^-$ computed from the isoprominent condition for example 2. A heat map of $|\bm{B}|$ is plotted on each surface, showing the error from constant $|\bm{B}|$.}
    \label{fig:Axis_Bsurface_Ex2}
\end{figure}

\begin{figure}
    \centering
    \includegraphics[width=0.9\linewidth]{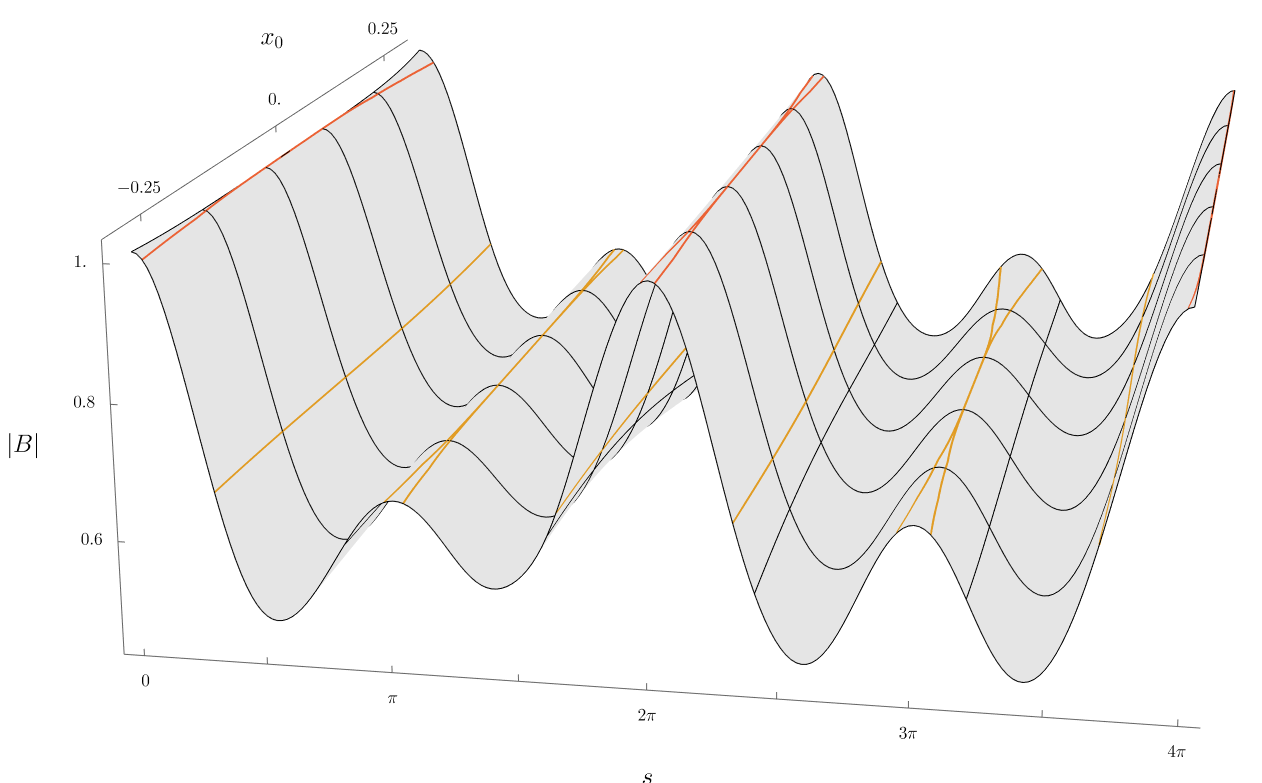}
    \caption{\footnotesize A plot of $|\bm{B}|$ along orbits with initial conditions $(0,x_0,0)$ for example 2. In red are the contours $|\bm{B}| = 1$, the on-axis peak value for $s_1 = 0$ and $2\pi$, and in orange are the $|\bm{B}| = e^{-2/5}$ contours, the desired on-axis peak value at $s_2 = \pi$ and $3\pi$. Both of these contours split   as $x_0$ grows, indicating a deviation from isoprominence due to the third-order truncation.}
    \label{fig:absB_vs_xs_Ex2}
\end{figure}

\section{Discussion}

This paper formulates the concept we call \textit{isoprominence} for a magnetic field: the
magnitude of the field, $B$, should be constant on surfaces where the $\bm{b}$ component of its gradient is locally maximum, recall \cref{def:isoprominence}.
It was shown in \cref{prop:IsoprominenceMinimizesTransitions} that isoprominent magnetic fields minimize separatrix crossings of guiding center orbits. In other words, isoprominent fields minimize orbit-type transitions. In \cref{sec:ProofOfFormalExistence}, power series expansions for isoprominent fields were shown to exist to all orders near a magnetic axis. Two simple example families of nearly-isoprominent fields were obtained and analyzed in \cref{sec:Examples}. 

This paper serves as a first foray into isoprominence. Consequently, there is much left to investigate about this relatively simple property for magnetic fields. The defining condition for isoprominence, namely, that the separatrix of the zeroth-order guiding center motion is a level set of the energy $H$, is simple enough to warrant a search for toroidal configurations with exactly isoprominent fields. Having such an example, rather than the truncated power series in \cref{sec:Examples}, would ease future numerical and theoretical investigations.

Further investigation into realistic physical constraints on the magnetic fields for an isoprominent field is also warranted. For instance, are there isoprominent magnetic fields that also satisfy ideal magneto-hydrostatic equilibrium $J\times B = \nabla p$? Similarly are there isoprominent Beltrami fields $B = \lambda \nabla\times B$, or vacuum fields, $\nabla \times B = 0$? The number of free functions available in the near-axis expansion of isoprominent fields suggests that any of these force-balance constraints can be satisfied. Other constraints worth investigating include those imposed by realistic coil shape design.

Finally,  we note that the current paper contained no investigations of particle trajectories. Even though \cref{prop:IsoprominenceMinimizesTransitions}  minimizes the possibility of separatrix crossings for isoprominent fields in theory, a study of particle trajectories in exact or approximate isoprominent fields would give insight into just how well the crossings are mitigated. This opens up the question of how to accurately and efficiently measure the number of separatrix crossings, so that one can quantify approximate isoprominence. We plan to answer such questions in a future paper.

\section{Acknowledgements}
The work of JWB was supported by the Los Alamos National Laboratory LDRD program under Project No. 20180756PRD4 as well as the US Department of Energy Office of Science as part of the Applied Scientific Computing Research program. ND and JDM were supported by the Simons Foundation under Grant No. 601972, ``Hidden Symmetries and Fusion Energy.''  Helpful discussions with Robert MacKay are gratefully acknowledged.

\appendix

\section{Explicit restriction for order 2 isoprominance}
\label{sec:ExplicitOrder2}
In this appendix we directly construct the conditions for a magnetic field to be isoprominent to second order in the near-axis expansion. Higher-order calculations can be found in the Mathematica codes at \citeInline{duignanCodesInvestigatingIsoprominence2022}. We follow the outline in \cref{alg:cap}. The calculation is performed in Frenet-Serret coordinates \eqref{eq:EuclidMapping}.

The first task is to find a series representation for $\Sigma^-$ near a local maximum point $s_0$ of $B_0(s)$. This local representation is a graph in $x,y$ given by $\sigma = s_0 + \sigma_1 + \sigma_2 + \dots$, where $\sigma_k$ denotes the $k^{\text{th}}$-order term in the power series expansion  in $z = (x,y)$ of $\sigma$ near the magnetic axis $x = y = 0$. The terms in the series are obtained from
expanding the left hand side of \eqref{sigma_eqn} to obtain
\[
   B_0'(\sigma) = B_0''(s_0)(\sigma_1+\sigma_2 + \ldots) + \tfrac12 B_0'''(s_0)(\sigma_1 + \sigma_2 +\ldots)^2 +\ldots
\]
Similarly, the first two terms in the power series expansion of the right hand side of \eqref{sigma_eqn} are
\begin{align*}
    \partial_1 B &\equiv \partial_s B_1 + \frac{\bm{B}^\perp_1\cdot\nabla B_1}{B_0} ,\\
    \partial_2 B & \equiv \partial_s B_2 + \frac{\bm{B}_2^\perp\cdot\nabla B_1}{B_0}+\frac{\bm{B}_1^\perp\cdot\nabla B_2}{B_0} - \frac{\bm{B}_1^\perp\cdot\nabla B_1}{B_0}\frac{B_1}{B_0}.
\end{align*}
Thus the isoprominence condition at first order \eqref{sigma_eqn} requires that
\begin{align}
    \boxed{\sigma_1 = -\left(\frac{\partial_1{B}}{B_0^{\prime\prime}}\right)\bigg|_{s = s_0}}.\label{sigma1_formula}
\end{align}
Similarly, at second order in $z$ we have
\[
    B_0^{\prime\prime}\sigma_2 + \tfrac12 B_0^{\prime\prime\prime}\,\sigma_1^2 = - \partial_s(\partial_1B)\,\sigma_1 - \partial_2B,
\]
or
\begin{equation}
\boxed{ \sigma_2  = -\left(\frac{\partial_2B}{B_0^{\prime\prime}}\right)\bigg|_{s = s_0} + \frac{1}{2}\left(\frac{[\partial_s - B_0^{\prime\prime\prime}/B_0^{\prime\prime}](\partial_1B)^2}{(B_0^{\prime\prime})^2}\right)\bigg|_{s=s_0}}.
\end{equation}

The second step is to relate the power series expansion of $B = |\bm{B}|$ to that of $\bm{B}$. To all orders, the terms in the two series are related through the metric 
\eqref{eq:FSMetric} by $B = \sqrt{g(\bm{B},\bm{B})}$. Expanding both sides in power series and using the fact that $g = g_0 + g_1+g_2$ has only terms up to  $2^{\text{nd}}$-order, we obtain
\[
    B_1 = \frac{1}{2B_0}(2g_0(\bm{B}_0,\bm{B}_1)+ g_1(\bm{B}_0,\bm{B}_0)) ,
\]
at first-order and 
\[
B_2 = \frac{1}{2B_0}(2g_0(\bm{B}_2,\bm{B}_0)+g_0(\bm{B}_1,\bm{B}_1)+ 2g_1(\bm{B}_1,\bm{B}_0)+g_2(\bm{B}_0,\bm{B}_0)) - \frac{1}{8B_0^3}(2g_0(\bm{B}_0,\bm{B}_1)+ g_1(\bm{B}_0,\bm{B}_0))^2,
\]
at second order. These formulas can be simplified by substituting the explicit expression in \eqref{eq:FSMetric}, leading to
\begin{equation}
\boxed{B_1 = B^s_1 - \kappa\,x\,B_0 } , \label{B1_formula}
\end{equation}
\begin{equation}
\boxed{B_2 = B^s_2 - \kappa\,x\,B^s_1 + \frac{1}{2B_0}\left( (B^x_1)^2+(B^y_1)^2 \right) + \frac{1}{2}\tau^2\,(x^2+y^2)B_0 }. \label{B2_formula}
\end{equation}

The third step is to combine the the power series expansions of $\sigma$ and $B$ in to identify an expansion for $B|_{\Sigma^-}$, which is the separatrix energy $h$
of the ZGC equations.
To all orders we have $h(x,y) = B(x,y,\sigma(x,y))$. Expanding both sides of this condition leads to the equations
\begin{align*}
    h_0 & = B_0(s_0),\\
    h_1 & = B_1(x,y,s_0),\\
    h_2 & = B_2(x,y,s_0) + \tfrac12 B^{\prime\prime}_0(s_0)\,\sigma_1^2.
\end{align*}
For isoprominence, it is necessary and sufficient for $h$ to be constant. It follows that we must require that $h_k=0$ for all $k>0$. Using \eqref{B1_formula}, \eqref{B2_formula}, and \eqref{sigma1_formula} these conditions at first and second order may be written
\begin{align*}
    0& = B^s_1(x,y,s_0) - \kappa(s_0)\,x\,B_0(s_0) ,\\
    0& = B^s_2(x,y,s_0) - \kappa(s_0)\,x\,B^s_1(x,y,s_0)+\frac{1}{2B_0(s_0)}
    \left((B^x_1)^2+(B^y_1)^2\right)\bigg|_{s=s_0}+\frac{1}{2}\tau^2(s_0)(x^2+y^2)\,B_0(s_0)\nonumber\\
    &+\frac{1}{2B_0^{\prime\prime}(s_0)}\left( \left[\partial_sB^s_1
    + \frac{B^x_1}{B_0}\partial_xB^s_1 + \frac{B^y_1}{B_0}\partial_yB^s_1\right]
    - \kappa\,B^x_1\right)^2\bigg|_{s=s_0}.
\end{align*}
Consistent with the general argument, each of these conditions may be regarded as specifying $B^s_k(x,y,s_0)$ in terms of lower-degree terms. In particular, for $B^s_1$ and $B^s_2$ we obtain the results
\begin{align}
\boxed{B^s_1(x,y,s_0) = \kappa(s_0)\,x\,B_0(s_0)} ,\label{Bs1_formula}
\end{align}
and
\begin{align}
\boxed{B^s_2(x,y,s_0) =\kappa^2(s_0)\,x^2\,B_0(s_0) -\frac{1}{2}\tau^2(s_0)(x^2+y^2)\,B_0(s_0)-\frac{1}{2}\left(\frac{(\partial_sB^s_1)^2}{B_0^{\prime\prime}}+\frac{(B^x_1)^2}{B_0}+\frac{(B^y_1)^2}{B_0}\right)\bigg|_{s=s_0}} .\label{Bs2_formula}
\end{align}

\newpage
\bibliography{isoProminence}
\bibliographystyle{unsrt}
\end{document}